\documentclass[11pt,reqno]{amsart}
\setlength{\topmargin}{0cm} \setlength{\textheight}{21cm}
\setlength{\oddsidemargin}{0in} \setlength{\evensidemargin}{0in}
\setlength{\textwidth}{6.5in} \setlength{\parindent}{.25in}
\usepackage{amsmath}
\usepackage{amssymb}

\def\squarebox#1{\hbox to #1{\hfill\vbox to #1{\vfill}}}

\theoremstyle{plain}

\newtheorem{thm}{Theorem}
   
\newtheorem{cor}{Corollary}
   
\newtheorem{lem}{Lemma}

\pagestyle{headings}

\renewcommand{\Re}{\mathop{\rm Re}\nolimits}
\renewcommand{\Im}{\mathop{\rm Im}\nolimits}

\def\la{\langle}
\def\ra{\rangle}

\def\ii{{\bf i}}

\newtheorem{rem}{Remark}
\newtheorem{prop}{Proposition}

\numberwithin{equation}{section}

\begin{document}
\def\C{{\mathbb C}}
\def\R{{\mathbb R}}
\def\N{{\mathbb N}}
\def\Z{{\mathbb Z}}
\def\T{{\mathbb T}}
\def\Q{{\mathbb Q}}
\def\SP{{\mathbb S}}
\def\d{{\partial}}
\def\mc{{\mathcal H}}
\def\1b{{\mathbb I}}
\def\tr{{\rm tr}\:}
\def\pv{\partial_x V}

\title[Spectral problems]
{ Spectral problems for operators with crossed magnetic and
electric fields}
\author[M. Dimassi]{Mouez Dimassi}
\author[V. Petkov]{Vesselin Petkov}

\address {D\'epartement de Math\'ematiques,
Universit\'e Paris 13, Avenue J.-B. Clement, 93430 Villetaneuse,
France}
\email{dimassi@math.univ-paris13.fr}
\address {Universit\'e Bordeaux I, Institut de Math\'ematiques de Bordeaux,  351, Cours de la Lib\'eration, 33405  Talence, France}
\email{petkov@math.u-bordeaux1.fr}
\thanks{The second author was partially supported by the ANR project NONAa}
\thanks{2000 {\it Mathematics Subject Classification:} Primary 35P25; Secondary 35Q40}
\maketitle

\hspace{5 cm} {\it In memory of Pierre Duclos}\\

\begin{abstract} We obtain a representation formula for the derivative of the spectral shift function $\xi(\lambda; B, \epsilon)$ related to the operators $H_0(B,\epsilon) = (D_x - By)^2 + D_y^2 + \epsilon  x$ and $H(B, \epsilon) = H_0(B, \epsilon) + V(x,y), \: B > 0, \epsilon > 0$.  We prove that the operator $H(B, \epsilon)$ has at most a finite number of embedded eigenvalues on $\R$ which is a step to the proof of the conjecture of absence of embedded eigenvalues of $H$ in $\R.$ Applying the formula for $\xi'(\lambda, B, \epsilon)$, we obtain a semiclassical asymptotics of the spectral shift function related to the operators $H_0(h) = (hD_x - By)^2 + h^2D_y^2 + \epsilon x$ and $H(h) = H_0(h) + V(x,y).$

\end{abstract}

\section{Introduction}
\renewcommand{\theequation}{\arabic{section}.\arabic{equation}}
\setcounter{equation}{0}

Consider the two-dimensional Schr\"odinger operator with
homogeneous magnetic and electric fields

$$H = H(B, \epsilon) = H_0(B, \epsilon) + V(x, y),\: D_x = -\ii \partial_x,\: D_y = -\ii \partial_y,$$
 where
$$H_0 = H_0(B, \epsilon) = (D_x - By)^2 + D_y^2 + \epsilon x.$$
Here $B > 0$ and $\epsilon > 0$ are proportional to the strength
of the homogeneous magnetic and electric fields and $V(x, y)$ is a
$L^\infty(\R^2)$ real valued function satisfying the estimates
\begin{equation}\label{eq:1.1}
| V(x, y)|  \leq C(1 +
|x|)^{-2 -\delta }(1 + |y|)^{-1- \delta },
\delta > 0.
\end{equation}
For $\epsilon \not= 0$ we have $\sigma_{\rm ess}(H_0(B, \epsilon))
= \sigma_{\rm ess} (H(B, \epsilon)) = \R$. On the other hand, for decreasing potentials $V$ it is possible to have embedded eigenvalues $\lambda \in \R$
and this situation is quite different from that with $\epsilon =
0$ when the spectrum of $H(B, 0)$ is formed by eigenvalues with
finite multiplicities which may accumulate only to Landau levels
$\lambda_n = (2n +1)B,\: n \in \N$ (see \cite{I},
\cite{MR}, \cite{RW} and the references cited there). The analysis of the spectral 
properties of $H$ and the existence of resonances have been
studied in \cite{FK1}, \cite{FK2}, \cite{DP2} under the assumption
that $V(x,y)$ admits a holomorphic extension in the $x$- variable
into a domain
$$\Gamma_{\delta_0} = \{z \in \C:\: 0 \leq |\Im z| \leq \delta_0\}.$$
On the other hand, without any assumption on the analyticity of $V(x,y)$,  it was proved in \cite{DP2} that the operator $(H - z)^{-1} - (H_0 - z)^{-1}$ for $z \in \C,\: \Im z \neq 0,$ is trace class. Thus, following the general setup \cite{K}, \cite{Ya},  we may define the spectral shift function $\xi(\lambda)=
\xi(\lambda; B, \epsilon)$ related to $H_0(B, \epsilon)$ and $H(B,
\epsilon)$ by
$$\la \xi', f \ra = {\rm tr} \Bigl(f(H) - f(H_0)\Bigr),\: f \in C_0^{\infty}(\R).$$
By this formula $\xi(\lambda)$ is defined modulo a constant but for the analysis of the derivative $\xi'(\lambda)$ this is not important. For the analysis of the behavior of $\xi(\lambda; B, \epsilon)$ it is important to have 
 a representation of the derivative $\xi'(\lambda; B, \epsilon)$. Such representation has been obtained in \cite{DP2} for strong magnetic fields $B \to +\infty$ under the  assumption that $V(x,y)$ admits an analytic continuation in $x$-direction.\\

In this paper we consider the operator $H$ without {\it any
assumption} on the analytic continuation of $V(x,y)$ and without
the {\it restriction} $B \to +\infty$. For such potentials we cannot use the techniques in \cite{FK1}, \cite{FK2} and \cite{DP2} related to the resonances of the perturbed problem.
    Our purpose is to study $\xi'(\lambda; B, \epsilon)$ and the
existence of embedded eigenvalues of $H$.  The key point in this direction is the following

\begin{thm} Let $V, \partial_xV\in L^\infty(\R^2;\R)$ and  assume that $(1.1)$ holds for $V$ and $\partial_xV$.
 Then for every $f \in C_0^{\infty}(\R)$ and $\epsilon \not= 0$ we have
\begin{equation} \label{eq:1.2}
\tr\: \Bigl(f(H) - f(H_0)\Bigr) = - \frac{1}{\epsilon} \tr\:
\Bigl((\d_x V) f(H)\Bigr).
\end{equation}
\end{thm}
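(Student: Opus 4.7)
The plan is to exploit the translation-covariance of $H$ in the $x$-direction. Setting $U_t := e^{-\ii t D_x}$, which acts as $(U_t \phi)(x) = \phi(x-t)$, one has $U_t x U_t^{-1} = x - t$ while $U_t$ commutes with $(D_x - By)^2$ and $D_y^2$, so
$$U_t H U_t^{-1} = H(t) - \epsilon t, \qquad H(t) := H_0 + V(x - t, y),$$
and $H(t)$ is unitarily equivalent to $H + \epsilon t$. I would then introduce $\Phi(t) := \tr\bigl[f(H(t)) - f(H_0)\bigr]$, which is trace class uniformly in $t$ thanks to the Helffer--Sj\"ostrand formula and the resolvent trace-class property recalled in the introduction. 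Unitary invariance of the trace together with the above intertwining gives the key identity
$$\Phi(t) = \tr\bigl[f(H + \epsilon t) - f(H_0 + \epsilon t)\bigr] = \int f'(\mu)\, \xi(\mu - \epsilon t)\, d\mu,$$
the second equality by Krein's formula.

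Next I would differentiate $\Phi$ directly in $t$. Since $\partial_t H(t) = -(\partial_x V)(x - t, y)$, the Helffer--Sj\"ostrand calculus applied to $\partial_t (z - H(t))^{-1} = -(z - H(t))^{-1}(\partial_x V)(x - t, y)(z - H(t))^{-1}$ together with cyclicity of the trace gives
$$\Phi'(t) = -\tr\bigl[f'(H(t))\, (\partial_x V)(x - t, y)\bigr] = -\tr\bigl[f'(H + \epsilon t)\, \partial_x V\bigr],$$
the second equality by conjugating both factors by $U_t$. Integrating from $-\infty$ to $0$ and changing variables $\mu = \epsilon t$,
$$\Phi(0) - \Phi(-\infty) = \int_{-\infty}^{0} \Phi'(t)\, dt = -\frac{1}{\epsilon}\int_{-\infty}^{0} \tr\bigl[f'(H + \mu)\, \partial_x V\bigr]\, d\mu.$$
A Fubini interchange of the $\mu$-integral with the trace, combined with the spectral-theorem identity $\int_{-\infty}^{0} f'(H + \mu)\, d\mu = f(H)$ (valid for $f \in C_0^\infty$), would then identify the right-hand side with $-\frac{1}{\epsilon}\tr\bigl[f(H)\, \partial_x V\bigr]$. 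It remains to verify $\Phi(-\infty) = 0$, which I would read off the SSF representation by dominated convergence, using $\int f' = 0$ and one-sided boundedness of $\xi$ at infinity.

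The main technical obstacle will be the Fubini interchange, which demands a uniform trace-norm estimate such as $\|f'(H + \mu)\, \partial_x V\|_1 = O((1 + |\mu|)^{-1 - \delta'})$ for some $\delta' > 0$. This reflects a Stark-type localization: the spectral projection of $H$ at energy near $-\mu$ is concentrated around $x \sim -\mu/\epsilon$, where $|\partial_x V|$ decays by the hypothesis (1.1); the estimate should be derivable via a Helffer--Sj\"ostrand expansion combined with standard pseudodifferential bounds. The case $\epsilon < 0$ is handled by the same argument with the direction of integration reversed, and produces the same final formula.
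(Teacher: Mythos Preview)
Your key insight---exploit the $x$-translation covariance of $H$---is exactly the one the paper uses. However, your execution diverges from the paper's in a way that creates real gaps.

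The paper works with the \emph{same} unitary $U_\tau$ but never integrates to infinity. Instead it proves directly (Proposition~1) that
\[
\tr\bigl([U_\tau,H]f(H)-[U_\tau,H_0]f(H_0)\bigr)=0,
\]
computes the commutators explicitly as $[U_\tau,H_0]=\epsilon\tau\,U_\tau$ and $[U_\tau,V]=(V(x+\tau,y)-V(x,y))U_\tau$, divides by $\tau$, and passes to the limit $\tau\to 0$ using only that $f(H)-f(H_0)$ and $g_\delta f(H)$ are trace class (Lemma~1) together with strong convergence $U_\tau\to I$. No infinite-range integral, no decay estimate in an energy parameter, and no information about $\xi$ at infinity is required.

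Your route, by contrast, leaves two substantive obligations unfulfilled. First, the Fubini interchange needs $\int_{-\infty}^0 \|f'(H+\mu)\,\partial_xV\|_1\,d\mu<\infty$; the Stark-localization heuristic you sketch is plausible, but turning it into a trace-norm decay estimate of order $(1+|\mu|)^{-1-\delta'}$ under the sole hypothesis~(1.1) is a genuine piece of work, not a routine consequence of Helffer--Sj\"ostrand plus ``standard pseudodifferential bounds.'' Second, the claim $\Phi(-\infty)=0$ rests on ``one-sided boundedness of $\xi$ at infinity,'' which is \emph{not} part of the hypotheses and is in fact the kind of conclusion one hopes to extract from a formula like~(\ref{eq:1.2}); invoking it here risks circularity. (Note also that simply equating the two expressions for $\Phi'(0)$ would give the identity only for test functions of the form $g=f'$, i.e.\ with $\int g=0$, leaving an undetermined additive constant---so the integration to $-\infty$ is not an avoidable detour in your scheme.)

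In short: right idea, but the paper's finite-$\tau$ commutator argument reaches the conclusion with strictly less input than your approach demands.
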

Notice that in (\ref{eq:1.2}) by $\partial_x V$ we mean the operator of multiplication by $\partial_x V$.
The formula (\ref{eq:1.2}) has been proved by D. Robert and X. P. Wang
\cite{RW2} for Stark Hamiltonians in absence of magnetic field ($B = 0$). In fact, the
result in \cite{RW2} says that
\begin{equation} \label{eq:1.3}
\xi'(\lambda; 0, \epsilon) = -\frac{1}{\epsilon}\int_{\R^2} \partial_x V(x, y)
\frac{\partial e}{\partial \lambda}(x,y, x, y; \lambda, 0,
\epsilon) dxdy,
\end{equation}
where $e(., .; \lambda, 0, \epsilon)$ is the spectral function of
$H(0, \epsilon).$
On the other hand, the spectral shift function in \cite{RW2} is related to the trace of the {\it time delay} operator $T(\lambda)$ defined via the corresponding scattering matrix $S(\lambda)$ (see \cite{RW1}). The presence of magnetic filed $B \not= 0$ and Stark potential lead to some serious difficulties to follow this way. Recently, Theorem 1 has been established by the authors in \cite{DP3} but the proof in \cite{DP3} is technical, long and based on the trace class properties of the operators 
\begin{equation} \label{eq:1.4}
\psi (H \pm \ii)^{-N}, \: \partial_x \circ \psi (H \pm \ii)^{-N},\: (H \pm \ii)\partial_x \circ \psi (H \pm \ii)^{-N -2}
\end{equation}
 with $\psi \in C_0^{\infty}(\R)$ and $N \geq 2.$ The idea is to use the commutators with the operators $\chi_R \partial_x$, where $\chi_R(x, y) = \chi\Bigl(\frac{x}{R}, \frac{y}{R}\Bigr)$ and $\chi \in C_0^{\infty}(\R^2)$ is a cut-off such that $\chi =1$ for $|(x, y)| \leq 1.$ One shows that 
\begin{equation} \label{eq:1.5}
 \tr\Bigl([\chi_R \partial_x, H] f(H) - [\chi_R \partial_x, H_0] f(H_0)\Bigr) = 0
\end{equation}
and next we are going to examine the limit $R \to \infty$ of the trace of the operators in (\ref{eq:1.5}). The commutators with $\partial_x$ and the presence of magnetic field lead to operators involving $D_x - By$ and this is one of the main difference with the case $B = 0.$ To overcome this difficulty we used in \cite{DP3} the trace class operators (\ref{eq:1.4}) which led to technical problems. On the other hand, the operator $\partial_x$ is often used for operators with Stark potential $\epsilon x$ and this influenced our approach in \cite{DP3}. One of the goal of this work is to present a new shorter and elegant proof of Theorem 1. The new idea is to apply the shift operator $U_{\tau}: f(x, y) \longrightarrow f(x + \tau, y)$ instead of $\partial_x$. In Proposition 1 we show that
$$\tr \Bigl([U_{\tau}, H]f(H) - [U_{\tau}, H]f(H_0)\Bigr) = 0.$$
The proof of the later equality is much easier than that of (\ref{eq:1.5}) and we don't need the trace class properties of the operators (\ref{eq:1.4}). Moreover, applying the operator $U_{\tau}$, we may generalize the result of Theorem 1 for Schr\"odinger operators $(D_x - C(y))^2 + D_y^2 + \epsilon x + V(x, y)$ with variable magnetic filed as well as for operators with magnetic potentials in $\R^n, n \geq 3.$

The second question examined in this work is the existence of
embedded real eigenvalues of $H$. In the physical literature one conjectures that for
$\epsilon \not= 0$ there are no  embedded eigenvalues. We established in \cite{DP3} a weaker result saying that in every interval $[a, b]$ we may have at most a finite number of embedded eigenvalues
with finite multiplicities. Under the assumption for analytic
continuation of $V$ it was proved in \cite{FK1} that in some finite interval $[\alpha(B, \epsilon), \beta(B, \epsilon)]$ there are no resonances $z$ of $H(B, \epsilon)$ with $\Re z \notin
[\alpha(B, \epsilon), \beta(B, \epsilon)]$. Since the real resonances $z$ coincide with the eigenvalues of $H(B, \epsilon)$, we obtain some information for the embedded eigenvalues. We prove in Section 3  without the condition of analytic continuation of $V(x, y)$ that $H$ has no embedded eigenvalues outside an interval  $[\alpha(B, \epsilon), \beta(B, \epsilon)]$. Combining this with the result in \cite{DP3}, we conclude that $H$ has at most a finite number of embedded eigenvalues. Finally, applying the representation formula for the derivative of the spectral shift function $\xi_h(\lambda) = \xi_h(\lambda, B, \epsilon)$ related to the operators $H_0(h) = (hD_x - By)^2 + h^2D_y^2 + \epsilon x$ and $H(h) = H_0(h) + V(x, y)$, we obtain a semiclassical  asymptotics of $\xi_h(\lambda)$ as $h \searrow 0$ uniformly with respect to $\lambda \in [E_0, E_1]$ under some assumptions on the critical values of the symbol of $H(h).$

\section{Representation of the spectral shift function}
\renewcommand{\theequation}{\arabic{section}.\arabic{equation}}
\setcounter{equation}{0}
\def\uh{U_{\tau}}

We suppose without loss of generality that $B = \epsilon = 1.$ Set $\la z \ra = (1 + |z|^2)^{1/2}.$ For reader convenience we recall the following lemma proved in \cite{DP3}
\begin{lem} 
Let $\delta>0$ and let $k_j(x,y)=\la x\ra^{-j(1+\delta)}\la y\ra^{-j({1\over 2}+\delta)}, j=1,2$. The operators
$G_2:=k_2(H_0+\ii)^{-2},\,\,G_2^*$,
$($resp. $G_1:=k_1(H_0+\ii)^{-1},\,\,\, G_1^*)$, are trace class $($resp. Hilbert-Schmidt$)$.
\end{lem}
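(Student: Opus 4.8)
The plan is to reduce everything to standard estimates on the free magnetic-Stark resolvent combined with the fact that negative powers of $\lx$ and $\la y\ra$ are smoothing in a suitable scale of spaces. First I would recall that $H_0 = (D_x-y)^2 + D_y^2 + x$ with $B=\epsilon=1$, and that by elliptic-type regularity the operator $(H_0+\ii)^{-1}$ maps $L^2(\R^2)$ into the magnetic Sobolev space $\mathcal{B}^2$ with norm controlled by $\|H_0 u\|$. Because of the unbounded potential $x$, one actually has the a priori estimate $\|(D_x-y)^2 u\| + \|D_y^2 u\| + \|\,x\,u\| \le C(\|H_0 u\| + \|u\|)$, which follows from writing $\|H_0 u\|^2$ as a sum of squares plus commutator terms and absorbing the commutator $[D_y^2, x]=0$, $[(D_x-y)^2,x]=0$ — in fact $x$ commutes with the kinetic part, so the estimate is almost immediate, and one gets in addition control of first derivatives by interpolation. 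The point I want to extract is: $(H_0+\ii)^{-1}$ is bounded from $L^2$ to the weighted space $\langle x\rangle^{-1}L^2 \cap \{u : D_x u, D_y u \in L^2\}$, uniformly.

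Next I would handle the Hilbert–Schmidt claim for $G_1 = k_1(H_0+\ii)^{-1}$. The standard criterion is that an operator with kernel $K(z,z')$ is Hilbert–Schmidt iff $K \in L^2(\R^2\times\R^2)$. Write $G_1 = k_1(x,y)\,\langle D\rangle^{-1}\,\cdot\,\langle D\rangle(H_0+\ii)^{-1}$; the second factor is bounded on $L^2$ by the a priori estimate above, and $k_1(x,y)\langle D\rangle^{-1}$ is Hilbert–Schmidt precisely when $k_1 \in L^2(\R^2)$, i.e. when $\int \lx^{-2(1+\delta)}\la y\ra^{-2(1/2+\delta)}\,dx\,dy < \infty$, which holds since $2(1+\delta) > 1$ and $2(1/2+\delta) = 1 + 2\delta > 1$. (Here I use the classical fact that $a(x)b(D)$ is Hilbert–Schmidt with HS norm $(2\pi)^{-1}\|a\|_{L^2}\|b\|_{L^2}$ when $a,b\in L^2$.) The same argument applied to the adjoint $G_1^* = (H_0-\ii)^{-1}k_1$ — or simply noting $\|G_1^*\|_{HS} = \|G_1\|_{HS}$ since $k_1$ is real — gives the statement for $G_1^*$.

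For the trace-class claim on $G_2 = k_2(H_0+\ii)^{-2}$ I would factor $G_2 = \big(k_1(H_0+\ii)^{-1}\big)\big(k_1(H_0+\ii)^{-1}\big)$, using that $k_2 = k_1^2$ and that $k_1$ commutes with itself; more precisely write $G_2 = k_1(H_0+\ii)^{-1}\cdot (H_0+\ii)k_1(H_0+\ii)^{-2} = G_1 \cdot \widetilde G_1$ where $\widetilde G_1 = (H_0+\ii)k_1(H_0+\ii)^{-2}$. One checks $\widetilde G_1$ is also Hilbert–Schmidt: commute $k_1$ through one resolvent, $(H_0+\ii)k_1(H_0+\ii)^{-2} = k_1(H_0+\ii)^{-1} + [H_0,k_1](H_0+\ii)^{-2}$, and the commutator $[H_0,k_1]$ is a first-order operator with coefficients decaying like $\lx^{-1-\delta}\la y\ra^{-1/2-\delta}$ times bounded functions (since $\partial_x k_1, \partial_y k_1$ decay at least as fast as $k_1$), so $[H_0,k_1](H_0+\ii)^{-2}$ is Hilbert–Schmidt by the same $a(x)b(D)$ argument with an extra resolvent to spare. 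A product of two Hilbert–Schmidt operators is trace class, giving $G_2 \in \mathcal{S}_1$; for $G_2^*$ one argues identically (or uses $\mathcal{S}_1$ is $*$-closed).

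The main obstacle is the bookkeeping of the commutators $[H_0,k_1]$ in the presence of the magnetic term $(D_x-y)^2$: one must verify that differentiating $k_1$ never produces growth in $y$ that the weight $\la y\ra^{-1/2-\delta}$ cannot absorb, and that the cross term linear in $D_x - y$ is still killed by one resolvent (which improves decay/smoothness by one order). Since $\partial_y k_1$ decays like $\lx^{-1-\delta}\la y\ra^{-3/2-\delta}$ — strictly faster than $k_1$ in $y$ — and the magnetic drift $y$ only appears paired with $D_x$, the worst term is $y\,\partial_y k_1$, whose $y$-growth is still dominated, so the scheme closes. Everything else is routine, and indeed the lemma is quoted from \cite{DP3}, so I would ultimately just cite that reference and sketch the above as the idea.
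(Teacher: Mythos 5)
The paper does not give a proof of this lemma; it quotes it from \cite{DP3}, so the only thing to assess is whether your sketch would actually close. It would not: there is a genuine gap at the Hilbert--Schmidt step for $G_1$, and the rest of the argument rests on it.

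The factorization $G_1=k_1\la D\ra^{-1}\cdot\la D\ra(H_0+\ii)^{-1}$ fails for two independent reasons. First, in $\R^2$ the multiplier $\la\zeta\ra^{-1}$ is only borderline: $\int_{\R^2}\la\zeta\ra^{-2}\,d\zeta=\infty$, so $k_1(x,y)\la D\ra^{-1}$ is \emph{not} Hilbert--Schmidt; the $a(x)b(D)$ criterion needs $b\in L^2$ as well as $a\in L^2$, and one would need $\la D\ra^{-s}$ with $s>1$. Second, and more fundamentally, the compensating factor $\la D\ra(H_0+\ii)^{-1}$ is \emph{not} bounded. The resolvent controls the magnetic derivative $D_x-y$ and $D_y$, not $D_x$ alone: $D_x=(D_x-y)+y$ and multiplication by $y$ is not $H_0$-bounded. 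One sees the unboundedness directly from the magnetic translations $W_a:=e^{\ii ax}T_a$ with $T_au(x,y)=u(x,y-a)$: $W_a$ commutes with $H_0$ while $W_a^{-1}D_xW_a=D_x+a$, so $\|D_x(H_0+\ii)^{-1}\|=\|(D_x+a)(H_0+\ii)^{-1}\|\ge|a|\,\|(H_0+\ii)^{-1}\|-\|D_x(H_0+\ii)^{-1}\|$ for every $a\in\R$, which forces $\|D_x(H_0+\ii)^{-1}\|=\infty$. Raising $s$ only aggravates this. The asymmetry of $k_j$, with stronger decay in $x$ than in $y$, is not incidental: it reflects that the Stark term tames $x$ while the Landau/harmonic-oscillator structure of $(D_x-y)^2+D_y^2$ on each Fourier fibre in $x$ tames $y$, and a generic isotropic Sobolev factorization cannot see this. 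A proof has to exploit the explicit structure of $H_0$ (e.g.\ its unitary equivalence, after Fourier transform in $x$ and a $\xi$-dependent translation in $y$, to $D_y^2+y^2+s$ on $L^2(\R_s)\otimes L^2(\R_y)$, and the associated resolvent kernel). Two smaller remarks: $[(D_x-y)^2,x]=2\ii(D_x-y)\ne 0$, so the a priori bound $\|xu\|\le C(\|H_0u\|+\|u\|)$ is not immediate by commutativity as you claim (though it is true); and your bootstrap $G_2=G_1\widetilde G_1$ with the $[H_0,k_1]$ bookkeeping is the standard, correct way to pass from Hilbert--Schmidt to trace class, but it presupposes a valid $G_1$ bound, which the sketch does not provide.
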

As an application of Lemma 1 recall that Proposition 1 in \cite{DP3} says that for $g \in C_0^{\infty}(\R)$ the operators $V g(H)$ and $V g(H_0)$ are trace class. Obviously, the same is true for $V(x + \tau, y)g(H)$ and we will use this fact below.
Consider the shift operator 
$$\uh: f(x, y) \longrightarrow f(x + \tau, y).$$
 Let $H_0 = (D_x - y)^2 + D_y + x, H = H_0 + V(x, y).$ It is clear that
$$[\uh, H_0] u = \uh H_0 u - H_0 \uh u = \uh (xu) - x\uh u = \tau\uh u,$$
hence $[\uh, H_0] = \tau \uh.$ Next
$$[\uh, V] = \uh(Vu) - V \uh u = V(x + \tau)\uh u - V \uh u = \Bigl(V(x+\tau, y) - V(x, y)\Bigr)\uh u.$$
Thus given a function $f \in C_0^{\infty}(\R)$, we get
$$[\uh, H] f(H) - [\uh, H_0] f(H_0) = \Bigl[ \tau + (V(x +\tau, y) - V(x, y))\Bigr] \uh f(H) - \tau \uh f(H_0)$$
$$= \tau \uh \Bigl(f(H) - f(H_0)\Bigr) + \Bigl(V(x +\tau, y) - V(x, y)\Bigr) \uh f(H).$$
\begin{prop} We have the equality
\begin{equation} \label{eq:2.1}
{\rm tr} \Bigl( [\uh, H] f(H) - [\uh, H_0] f(H_0)\Bigr) = 0.
\end{equation}

\end{prop}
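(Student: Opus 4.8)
The plan is to exploit the cyclicity of the trace together with the fact that $U_{\tau}$ is unitary, so that $\operatorname{tr}(U_{\tau} A U_{\tau}^{-1}) = \operatorname{tr}(A)$ whenever $A$ is trace class. First I would rewrite the left-hand side of \eqref{eq:2.1} using $[\uh, H]f(H) - [\uh, H_0]f(H_0) = \uh\bigl(Hf(H) - H_0f(H_0)\bigr) - \bigl(Hf(H) - H_0f(H_0)\bigr)\uh$, i.e. the operator is the commutator $[\uh,\, g(H) - g(H_0)]$ with $g(\lambda) = \lambda f(\lambda) \in C_0^{\infty}(\R)$. The operator $g(H) - g(H_0)$ is trace class: indeed $g(H) - g(H_0) = \bigl(g(H) - g(H_0)\bigr)$ can be written, via the Helffer–Sjöstrand formula or a resolvent identity, as a sum of terms each containing a factor $V$ or $(H-z)^{-1}-(H_0-z)^{-1}$, and Lemma 1 (or rather its consequence quoted after it, that $Vg(H)$ and $Vg(H_0)$ are trace class) gives the claim. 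Since $\uh$ is unitary and $g(H)-g(H_0)$ is trace class, $\operatorname{tr}\bigl([\uh,\, g(H)-g(H_0)]\bigr) = \operatorname{tr}\bigl(\uh (g(H)-g(H_0))\uh^{-1}\bigr) - \operatorname{tr}\bigl(g(H)-g(H_0)\bigr) = 0$, because cyclicity of the trace applies to the product of the trace-class operator $g(H)-g(H_0)$ with the bounded operators $\uh$ and $\uh^{-1}$.

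The one point that needs care, and which I expect to be the main obstacle, is the justification that $g(H) - g(H_0)$ is genuinely trace class and that the commutator manipulation is legitimate — in particular that $\uh(g(H)-g(H_0))$ and $(g(H)-g(H_0))\uh$ are individually trace class so that the difference of their traces makes sense and equals the trace of the difference. For this I would invoke the computation displayed just before the Proposition: it exhibits $[\uh,H]f(H) - [\uh,H_0]f(H_0) = \tau\uh(f(H)-f(H_0)) + (V(x+\tau,y)-V(x,y))\uh f(H)$. The first term is $\tau\uh$ times the trace-class operator $f(H)-f(H_0)$ (trace class by the general theory, since $(H-z)^{-1}-(H_0-z)^{-1}$ is trace class and one uses Helffer–Sjöstrand), hence trace class; the second term is $(V(x+\tau,y)-V(x,y))\uh f(H)$, and since both $V(x+\tau,y)f(H)$ and $V(x,y)f(H)$ are trace class (the remark after Lemma 1, applied to $V$ and to its shift), so is this term. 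Thus the left-hand side of \eqref{eq:2.1} is trace class as written.

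To finish, I would compute the trace of each of the two pieces using unitarity of $\uh$. For the first piece, $\operatorname{tr}\bigl(\tau\uh(f(H)-f(H_0))\bigr)$: one observes that $\uh H \uh^{-1} = H - \tau$ and $\uh H_0 \uh^{-1} = H_0 - \tau$ (from $[\uh,H_0]=\tau\uh$ and $[\uh,V]$ as computed, one actually gets $\uh H\uh^{-1} = H_0 - \tau + V(x-\tau,y)$; more directly, $\uh(H-H_0)\uh^{-1}$ translates $V$), so that conjugating $f(H)-f(H_0)$ by $\uh$ and regrouping, combined with the identity $\operatorname{tr}(f(H)-f(H_0)) = \operatorname{tr}(\uh^{-1}(f(H)-f(H_0))\uh)$ and the corresponding shifted statement, the two contributions cancel against the second piece after one rewrites $(V(x+\tau,y)-V(x,y))\uh f(H)$ by moving $\uh$ through. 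The cleanest route, and the one I would actually write, is the first paragraph's observation: the whole expression is $[\uh,\, g(H)-g(H_0)]$ with $g(\lambda)=\lambda f(\lambda)$, the bracketed operator is trace class, $\uh$ is unitary, and the trace of a commutator of a trace-class operator with a bounded operator — when that commutator is itself trace class — vanishes by cyclicity. This yields \eqref{eq:2.1} with essentially no computation.
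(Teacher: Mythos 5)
Your proposed rewrite at the start of the argument is incorrect as an operator identity. In general, for operators $A,B,C$ one has $[A,B]C = [A,BC] - B[A,C]$, not $[A,B]C = [A,BC]$. Applying this with $A=\uh$, $B=H$, $C=f(H)$ (and similarly for $H_0$) gives
\[
[\uh,H]f(H) - [\uh,H_0]f(H_0) = [\uh,\,g(H)-g(H_0)] - \Bigl(H[\uh,f(H)] - H_0[\uh,f(H_0)]\Bigr),
\]
with $g(\lambda)=\lambda f(\lambda)$. The extra bracket on the right does not vanish: $\uh$ does not commute with $f(H)$ or $f(H_0)$ (indeed $\uh f(H)\uh^{-1} = f(\uh H\uh^{-1})$ and $\uh H\uh^{-1} = H_0 - \tau + V(x-\tau,y) \neq H$), so $[\uh,f(H)]\neq 0$. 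Hence the whole expression is \emph{not} equal to $[\uh,\,g(H)-g(H_0)]$, and the slick ``trace of a commutator of a trace-class operator with a unitary vanishes'' argument does not apply to the actual operator in \eqref{eq:2.1}.

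The difficulty you acknowledge in your second paragraph is real but lies elsewhere than you place it: the problem is not merely checking that $g(H)-g(H_0)$ is trace class (it is, as the paper verifies), but that the dropped term $H[\uh,f(H)] - H_0[\uh,f(H_0)]$ involves products like $H\uh f(H)$ which are \emph{not} individually trace class (they are not even compact), so one cannot blindly invoke cyclicity to move $\uh$ across $Hf(H)$. This is precisely why the paper's proof does not attempt the commutator shortcut: it splits the trace into the pieces $(I)$, $(II_1)$, $(II_2)$, shows directly that each relevant \emph{difference} of operators is trace class, and for $(II_2)$ inserts $(H+\ii)(H+\ii)^{-1}$ so that cyclicity is applied only to genuinely trace-class products. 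The final reduction to $\tr(V\uh f(H_0)) = \tr(\uh f(H_0) V)$ is then handled by an approximation with cut-offs $\chi_R$. Your proposal skips exactly the step that is the substance of the proposition.
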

\begin{proof} We write
$${\rm tr}\: \Bigl[ \uh H f(H) - \uh H_0 f(H_0) + H_0 \uh f(H_0) - H \uh f(H)\Bigr]$$
$$ = {\rm tr}\: \uh \Bigl(H f(H) - H_0 f(H_0)\Bigr) + {\rm tr} \Bigl( H_0 \uh f(H_0) - H \uh f(H)\Bigr) = (I) + (II).$$
For the term $(I)$, by using the cyclicity of the trace, we have
\begin{equation} \label{eq:2.2}
(I) = {\rm tr}\: \Bigl((H f(H) - H_0 f(H_0)) \uh\Bigr) = {\rm tr}\: \Bigl(f(H) H - f(H_0) H_0\Bigr) \uh.
\end{equation}
On the other hand,
$$(II) = {\rm tr}\: \Bigl((H_0- H) \uh f(H_0) \Bigr) + {\rm tr}\: \Bigl[ H \uh \Bigl(f(H_0) - f(H)\Bigr) \Bigr] = (II_1) + (II_2).$$
and we justify below the trace class properties of the operators $(II_1)$ and $(II_2)$.
For $(II_1)$ we write
$$- (II_1) = V \uh f(H_0) = \uh [ \uh^{-1} V \uh ] f(H_0) = \uh V(x - \tau, y) f(H_0)$$
and the operator on the right hand side is trace class. 

It is easy to see that the operator $\Bigl(f(H_0) - f(H)\Bigr)(H + \ii)$ is trace class since
$$\Bigl(f(H_0) - f(H)\Bigr)(H + \ii)= \Bigl[f(H_0)(H_0 + \ii) - f(H)(H + \ii)\Bigr] + f(H_0) V,$$
where on the right hand side we have a sum of two trace class operators. The same argument shows that the operator $H (f(H_0) - f(H))$ is trace class. Next we show that the operator $H (f(H_0) - f(H)) (H + \ii)$ is trace class.
To do this, we write
$$H (f(H_0) - f(H)) (H + \ii) = \Bigl(H_0 f(H_0) (H_0 + \ii) - H f(H)(H + \ii)\Bigr) + V f(H_0) (H_0 + \ii)$$
$$+ V f(H_0) V + H_0 f(H_0) V$$
and the four operators on the right hand side are trace class. This implies that
$H\uh (f(H_0) - f(H))(H + \ii)$ is trace class since the commutator $[H , \uh]$ is a bounded operator.
After these preparations we write
$$(II_2) = H \uh (f(H_0) - f(H)) = \uh H (f(H_0) - f(H)) + [H, \uh] (f(H_0) - f(H))$$
which obviously is trace class. Exploiting the trace class properties, we can write
$$(II_2) = {\rm tr}\: \Bigl[ H \uh (f(H_0) - f(H)(H + \ii)(H + \ii)^{-1}\Bigr]$$
$$ = {\rm tr}\: \Bigl[ (H + \ii)^{-1}H \uh (f(H_0) - f(H))(H + \ii)\Bigr]$$
$$ = {\rm tr}\: \Bigl((f(H_0) - f(H))(H+ \ii)(H + \ii)^{-1} H \uh\Bigr) = {\rm tr}\: \Bigl((f(H_0) - f(H)) H \uh\Bigr).$$
Combining the above expressions, we get
$$(I) + (II_1) + (II_2) = {\rm tr}\: \Bigl((H_0 - H) \uh f(H_0)\Bigr) + {\rm tr}\: \Bigl(f(H_0)(H - H_0) \uh\Bigr)$$
$$= {\rm tr}\: \Bigl(-V \uh f(H_0)\Bigr) + {\rm tr}\: \Bigl(\uh f(H_0) V\Bigr).$$
It remains to show that ${\rm tr}\: \Bigl(V \uh f(H_0)\Bigr) = {\rm tr}\:\Bigl(\uh f(H_0) V \Bigr).$ To do this, choose a function $\chi \in C_0^{\infty}(\R^2)$ such that $\chi = 1$ for $|(x,y)|\leq 1$. For
$R > 0$ set
$$\chi_R(x, y) = \chi\Bigl(\frac{x}{R}, \frac{y}{R}\Bigr)$$
and consider 
$${\rm tr}\:\Bigl(V \uh f(H_0) \chi_R\Bigr) = {\rm tr}\: \Bigl(\uh f(H_0) V \chi_R\Bigr).$$
The operator $\chi_R$ converges strongly to identity as $ R \to \infty$ and applying the well known property of trace class operators (see for instance, Proposition 1 in \cite{DP3}), we conclude that
$${\rm tr}\:\Bigl(V \uh f(H_0)\Bigr) = {\rm tr}\: \Bigl(\uh f(H_0) V \Bigr)$$
and the proof is complete.
\end{proof}
{\bf Proof of Theorem 1.} According to Proposition 1, we have
\begin{equation} \label{eq:2.3}
{\rm tr}\:\Bigl(\uh (f(H) - f(H_0)\Bigr) = - {\rm tr}\: \Bigl(\frac{V(x + \tau, y) - V(x, y)}{\tau} \uh f(H)\Bigr).
\end{equation}
We take the limit $ \tau \to 0$ and observe that 
$$\uh \longrightarrow I,\: \frac{V(x + \tau, y) - V(x, y)}{\tau} \uh  \longrightarrow \partial_x V$$
 strongly. Since $(f(H) - f(H_0))$ is a trace class operator, applying once more the property of trace class operators, we get
$$\lim_{\tau \to 0} {\rm tr}\: \Bigl(\uh (f(H) - f(H_0)\Bigr) = {\rm tr}\: (f(H) - f(H_0)).$$
To treat the limit $\tau \to 0$ in the right hand term of (\ref{eq:2.3}), consider the function,
$$g_{\delta}(x,y) = \la x \ra ^{-2-\delta} \la y \ra ^{-1-\delta}$$
 $\delta > 0$ being the constant of (1.1). Following Lemma 1, the operator $g_{\delta}(H_0 + \ii)^{-2}$ is trace class. Hence
$$g_{\delta} f(H) = g_{\delta} (f(H) - f(H_0)) + g_{\delta}(H_0 + \ii)^{-2} (H_0 + \ii)^2 f(H_0)$$
is also a trace class operator.

 To treat the limit $\tau \to 0,$ we use the representation
$$\Bigl(\frac{V(x + \tau, y) - V(x, y)}{\tau} g_{\delta}^{-1}\Bigr) \Bigl[ g_{\delta}\uh g_{\delta}^{-1}] g_{\delta} f(H).$$
The operators in the brackets $\Bigl(...\Bigr)$, $[...]$ converge strongly as $\tau \to 0$ to $(\partial_x V) g_{\delta}^{-1}$ and $I$, respectively. Letting $\tau \to 0$, we obtain 
$$\lim_{\tau \to 0} {\rm tr}\: \Bigl(\frac{V(x + \tau, y) - V(x, y)}{\tau}\Bigr) \uh f(H) = {\rm tr}\: \Bigl((\partial_x V) f(H)\Bigr)$$
 and the proof is complete.
\begin{rem} The proof of Theorem $1$ works for operators $M =(D_x - C(y))^2 + D_y^2 + \epsilon x + V(x, y)$ with non-linear $C(y)$ assuming that we have an analog of Lemma $1$ for $H$ and $H_0$ replaced by $M$ and $M_0 = (D_x - C(y))^2 + D_y^2 + \epsilon x,$ respectively. Also we may examine the operators in $\R^3$ having the form
$$\Bigl(D_x + \frac{B}{2}y\Bigr)^2 + \Bigl(D_y - \frac{B}{2}x\Bigr)^2 + D_z^2 + \epsilon z + V(x, y, z)$$
applying the shift operator $U_{\tau}: \:f(x, y, z) \longrightarrow f(x, y , z + \tau).$ Some operators with magnetic potentials and Stark potential in $\R^n,\: n \geq 3,$ can be investigated by the same approach.

\end{rem}

Now consider the operators $H_0(h) = (hD_x - By)^2 + h^2D_y^2 + \epsilon x,\: H(h) = H_0(h) + V(x, y),\: h > 0.$
Under the assumption (1.1) for $V(x, y)$ we have the statement of Lemma 1 for $H_0$ replaced by $H_0(h)$. Moreover, the operators $Vg(H(h))$ and $Vg(H_0(h))$ are trace class for every $g \in C_0^{\infty}(\R).$ Thus for every $f \in C_0^{\infty}(\R)$ the operator $f(H(h)) - f(H_0(h))$ is trace class and we can define the spectral shift function $\xi_h = \xi_h(\lambda, B , \epsilon)$ modulo constant by the formula
$$\langle \xi_h', f \rangle = {\rm tr}\: \Bigl(f(H(h) - f(H_0)\Bigr),\: f \in C_0^{\infty}(\R).$$
Under the assumption of Theorem 1, we obtain repeating the proof of (1.2) the representation
\begin{equation} \label{eq:2.4}
{\rm tr}\Bigl(f(H(h)) - f(H_0(h))\Bigr) = -\frac{1}{\epsilon} {\rm tr} \Bigr((\partial_x V) f(H(h))\Bigr).
\end{equation}

\section{ Embedded eigenvalues of $H$}


In this section we use the notation
$$L = H(0) = (D_x - By)^2 + D_y^2 + \epsilon x.$$

 Our purpose is to prove the following

\begin{thm}
There exists $C > 0$ such that $H$ has no eigenvalues $\lambda,\: |\lambda| \geq C.$  
\end{thm}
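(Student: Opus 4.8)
The plan is to exploit the Stark potential $\epsilon x$ together with formula (1.2) from Theorem~1 to rule out eigenvalues far from a bounded energy window. Suppose, for contradiction, that $\lambda$ is an eigenvalue of $H$ with $|\lambda|$ large, and let $u$ be a corresponding normalized eigenfunction, $Hu=\lambda u$, $\|u\|=1$. The core idea is a virial-type identity: testing the eigenvalue equation against the shift operator $\uh$ (or equivalently differentiating $\la U_{-\tau}HU_{\tau}u,u\ra$ in $\tau$) produces the relation $\epsilon\|u\|^2 = -\la (\partial_x V)u,u\ra$, because $[U_\tau,H_0]=\tau U_\tau$ contributes exactly $\epsilon$ (with $\epsilon=1$ in the normalization $B=\epsilon=1$) while $[U_\tau,V]$ contributes the difference quotient converging to $\partial_x V$. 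Since $\|\partial_x V\|_{L^\infty}$ is finite by hypothesis, this alone forces $\epsilon\le\|\partial_x V\|_{L^\infty}$, which is a constraint on the coefficients but not yet on $\lambda$; so a naive virial argument is not quite enough and one must localize in energy.

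The refinement I would carry out is to combine the virial identity with a weighted estimate that uses the decay (1.1) of $\partial_x V$ in the $x$-variable, together with the fact that an eigenfunction at energy $\lambda$ must be concentrated near the classical region $\{\epsilon x \approx \lambda\}$, i.e.\ near $x\approx \lambda/\epsilon$. Concretely: from $Hu=\lambda u$ we get $\la(\epsilon x)u,u\ra = \lambda - \la((D_x-By)^2+D_y^2)u,u\ra \le \lambda$, and since $(D_x-By)^2+D_y^2\ge 0$ this gives $\la x u,u\ra \le \lambda/\epsilon$; a matching lower bound comes from writing $((D_x-By)^2+D_y^2)u = (\lambda - \epsilon x - V)u$ and controlling the kinetic energy, so that the bulk of $|u|^2$ lives in a region where $|x|\gtrsim c|\lambda|$ once $|\lambda|$ is large. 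On that region $|\partial_x V(x,y)|\le C\la x\ra^{-2-\delta}\la y\ra^{-1-\delta} \le C|\lambda|^{-2-\delta}\la y\ra^{-1-\delta}$, so $|\la(\partial_x V)u,u\ra|\le C|\lambda|^{-2-\delta} \to 0$, contradicting the virial identity $\epsilon = -\la(\partial_x V)u,u\ra$ with $\epsilon$ fixed and positive.

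The main obstacle is making the localization step rigorous: an eigenfunction of a Stark-type operator is not literally supported in $\{\epsilon x\le\lambda\}$, it only decays (super-exponentially, in the Stark direction) beyond the turning point, and on the allowed side it need not be concentrated at $x\approx\lambda/\epsilon$ — a priori $|u|^2$ could have mass near $x=0$. To handle this I would introduce a smooth cutoff $\phi_R$ in $x$ separating $\{x\le R\}$ from $\{x\ge 2R\}$ with $R$ a fixed large constant (independent of $\lambda$), estimate $\la\phi_R(x)\,\partial_x V\,u,u\ra$ trivially by $\|\partial_x V\|_\infty\|\phi_R u\|^2$ using decay in $y$ only, and separately show $\|\phi_R u\|^2$ is small: on $\{x\le 2R\}$ we have $\epsilon x\le 2\epsilon R$, so $((D_x-By)^2+D_y^2)u=(\lambda-\epsilon x-V)u$ has, on that set, a large ($\sim\lambda$) right-hand side, which via an IMS-type commutator estimate propagates into a lower bound $\|((D_x-By)^2+D_y^2)^{1/2}\phi_R u\|^2 \gtrsim \lambda\|\phi_R u\|^2 - O(1)$, while globally $\la((D_x-By)^2+D_y^2)u,u\ra\le\lambda+\|V\|_\infty$; comparing gives $\|\phi_R u\|^2 = O(1/\lambda)$. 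On the complementary region $\{x\ge R\}$ the decay (1.1) of $\partial_x V$ in $x$ gives the factor $R^{-2-\delta}$, hence $|\la(1-\phi_R)\partial_x V u,u\ra|\le C R^{-2-\delta}$. Choosing first $R$ large so that $CR^{-2-\delta}<\epsilon/2$, and then $|\lambda|$ large so that $\|\partial_x V\|_\infty\cdot O(1/\lambda)<\epsilon/2$, contradicts $\epsilon = -\la(\partial_x V)u,u\ra \le CR^{-2-\delta} + \|\partial_x V\|_\infty\|\phi_R u\|^2 < \epsilon$. One must also record the sign: since $\la x u,u\ra\le\lambda/\epsilon$ the relevant large-$|\lambda|$ regime on the $x\ge R$ side is $\lambda\to+\infty$; the case $\lambda\to-\infty$ is actually easier, since then the allowed region $\{\epsilon x\lesssim\lambda\}$ forces $x\le\lambda/\epsilon\to-\infty$ and the same decay estimate applies directly with $|x|$ large everywhere on the support of $u$. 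This yields the constant $C$ and completes the proof.
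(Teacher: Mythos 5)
Your starting point --- the virial identity $\epsilon=-\langle(\partial_x V)u,u\rangle$ obtained from testing $[\partial_x,H]$ on the eigenfunction --- is exactly the paper's starting point (equation (3.1) in the paper, applied with $f_n(H)\varphi_n=\varphi_n$). The genuine gap is in the localization step used to make $\langle(\partial_x V)u,u\rangle$ small.

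First, your global kinetic-energy bound is false: from $Hu=\lambda u$ one gets
$\langle((D_x-By)^2+D_y^2)u,u\rangle=\lambda-\epsilon\langle xu,u\rangle-\langle Vu,u\rangle$, and there is no a priori reason for $\langle xu,u\rangle\geq 0$. The allowed region for energy $\lambda\gg 1$ is $\{x\lesssim\lambda/\epsilon\}$, which extends to $x\to-\infty$, so the kinetic energy is not bounded by $\lambda+O(1)$; it could be much larger. Second, even granting that bound, your arithmetic does not deliver $\|\phi_R u\|^2=O(1/\lambda)$: comparing $\lambda\|\phi_R u\|^2-O(1)$ with $\lambda+O(1)$ only gives $\|\phi_R u\|^2=O(1)$, which is vacuous. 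More fundamentally, a fixed ($\lambda$-independent) spatial cutoff $\phi_R$ in $x$ does not obviously carry a small fraction of $\|u\|^2$ as $\lambda\to+\infty$: the turning point $x\approx\lambda/\epsilon$ recedes to $+\infty$, but the classically allowed side $\{x\lesssim\lambda/\epsilon\}$ always contains the support of $\phi_R$, so an argument showing the eigenfunction has negligible mass there is exactly what needs to be proved and cannot be obtained from the crude bounds you wrote down. The actual mechanism (suggested by the Airy-function asymptotics for Stark problems) is a high-energy decay of the local amplitude like $\lambda^{-1/2}$, and making this rigorous for the perturbed magnetic Stark operator is the hard part.

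The paper avoids reasoning about the eigenfunction's spatial profile directly. Instead, it writes $f_n(H)$ via an almost-analytic extension $\tilde f_n$ supported near $\lambda_n$, splits the contour integral into a thin strip $|\Im z|\leq\eta$ (small by almost-analyticity) and its complement, approximates both $\partial_x V$ and $V$ by compactly supported $V_0,V_1$, and then invokes Proposition~2 (from Ferrari--Kovarik, \cite{FK2}): for compactly supported $f,g$ and $\gamma$ bounded away from $0$, $\|f(\lambda+\ii\gamma-L)^{-1}g\|\to 0$ as $\lambda\to\pm\infty$, where $L$ is the free magnetic Stark operator. That resolvent estimate plays precisely the role you wanted your localization estimate to play, but it is an input the paper takes from the literature; your proposal as written does not supply a substitute for it. If you want to push your route, you would need to prove a high-energy smallness estimate for $\chi(x,y)(H-\lambda\mp\ii0)^{-1}\chi(x,y)$ (or an equivalent local decay of $u$) rather than rely on the kinetic-energy bookkeeping above, which does not close.
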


\begin{proof}
First notice that for every function $f \in C_0^{\infty}(\R)$ we have 

\begin{equation}\label{eq:3.1}
f(H)[\partial_x,H]f(H)= \epsilon f^2(H) + f(H)\partial_{x}V f(H).
\end{equation}
We will show the absence of embedded eigenvalues $\lambda > C > 0$. The case $\lambda < - C$ can be treated by the same argument. Assume that there exists a sequence of eigenvalues $\lambda_n \longrightarrow + \infty,\: \lambda_{n+1} > \lambda_n + 1,\: \forall n$ and let $H\varphi_n = \lambda_n \varphi_n,\: n \in \N$ with $(\varphi_i, \varphi_j) = \delta_{i,j}.$ Choose cut-off functions $f_n(t) \in C_0^{\infty}(\R)$ so that $f_n(\lambda_n) = 1,\: 0 \leq f_n(t) \leq 1$ and $f_n(t) = 0$ for $|t - \lambda_n| \geq 1/2$. It is clear that $f_n(H) \varphi_n = \varphi_n$ and
$$(\varphi_n, f_n(H)[\partial_x, H]f_n(H)\varphi_n) = 0,\: \forall n \in\N.$$
We wish to prove that for $n$ large enough we have
\begin{equation} \label{eq:3.2}
\Bigl|(\varphi_n, f_n(H) \pv f_n(H) \varphi_n)\Bigr| = \Bigl|(\varphi_n, \pv f_n(H) \varphi_n)\Bigr| \leq  \epsilon/2 
\end{equation}
which leads to a contradiction with (\ref{eq:3.1}) since $(\varphi_n, f^2_n(H) \varphi_n) = 1.$ Consider the operator
$$ f_n(H) = -\frac{1}{\pi}\int_{W_n}\bar{\partial} \tilde{f}_n(z)(z - H)^{-1} L(dz),$$
where $\tilde{f}_n(z)$ is an almost analytic continuation of $f_n$ with supp $\tilde{f}_n(z) \subset W_n$, $W_n = \{z \in \C:\: |z - \lambda_n| \leq 2/3\}$ is a complex neighborhood of $\lambda_n$ and 
$$\bar{\partial}\tilde{f}_n(z) = {\mathcal O}(|\Im z|^{\infty})$$
 uniformly with respect to $n$. Here $L(dz)$ is the Lebesgue measure in $\C.$ We write
$$(\varphi_n, \pv f_n(H)\varphi_n) = -\frac{1}{\pi}  \int_{W_n\cap \{|\Im z| \leq \eta\}}  \bar{\partial}\tilde{f}_n(z)(\varphi_n, (\pv) (z - H)^{-1}\varphi_n) L(dz)$$
$$  -\frac{1}{\pi}  \int_{W_n\cap \{|\Im z| > \eta\}}  \bar{\partial}\tilde{f}_n(z)(\varphi_n, (\pv - V_0)(z - H)^{-1}\varphi_n) L(dz) $$
$$ -\frac{1}{\pi}  \int_{W_n\cap \{|\Im z| > \eta\}}  \bar{\partial}\tilde{f}_n(z)(\varphi_n,  V_0(z - H)^{-1}\varphi_n) L(dz) = R_n + Q_n + S_n,$$
where $V_0(x, y) \in C_0^{\infty}(\R^2).$
We choose $\eta > 0$ small enough to arrange $|R_n| \leq \epsilon/6$ for all $n \in \N.$ Next we fix $0 < \eta < 1$ and we will estimate $Q_n$  and $S_n.$
For the resolvent $(z - L)^{-1}$ we will exploit the following 
\begin{prop} $($\cite{FK2}$)$  Let $f,\: g$ be bounded functions with compact support in $\R^2$. Then for every compact ${\mathcal K} \subset \R \setminus \{0\}$ we have 
$$\lim_{\lambda \to \pm \infty}  \|f (\lambda + \ii \gamma - L)^{-1} g\| = 0$$
uniformly for $\gamma \in {\mathcal K}.$
\end{prop}
We choose $V_0$ so that $\|\pv - V_0\|$ is sufficiently small in order to arrange $|Q_n| \leq \epsilon/6,\: \forall n \in \N.$
Now we pass to the estimation of $S_n.$ We have
\begin{equation} \label{eq:3.3}
V_0(z - H)^{-1} = V_0(z - L)^{-1} + V_0(z - L)^{-1}(V - V_1) (z - H)^{-1} + V_0(z - L)^{-1} V_1 (z - H)^{-1}.
\end{equation}
 We replace  $V_0(z - H)^{-1}$ in $S_n$ by the right hand side (\ref{eq:3.3}) choosing $V_1 \in C_0^{\infty}(\R^2).$
For the term involving $(V - V_1)$ in (\ref{eq:3.3}) we take $V_1$ so that $\|V - V_1\|$ is small enough, to obtain a term bounded by $\epsilon/18.$ Next we fix the potentials $V_0,\: V_1$ with compact support. By Proposition 2 setting $z = \lambda + \ii\gamma,\: \eta \leq |\gamma| \leq 1,$ we get
$$\|\bar{\partial}\tilde{f}_n(z)V_0( \lambda + \ii \gamma - L)^{-1} V_1 (H - z)^{-1}\|\leq C_2 \eta^{-1}\|V_0(\lambda + \ii \gamma - L)^{-1} V_1\| \leq \frac{9}{4 \pi^2}\frac{\epsilon}{18}$$
for $\Re z = \lambda \geq C_{\epsilon, \eta}.$ We choose $n \geq n_0 = n_0(\epsilon, \eta)$, so that $\Re z \geq C_{\epsilon, \eta}$ for $z \in W_n$ and $n \geq n_0$. Thus we can estimate the term involving $V_0(z - L)^{-1} V_1$ in (\ref{eq:3.3}) by $\epsilon/18.$ It remains to deal with the term containing $V_0 (z - L)^{-1}.$ Let $\psi(x, y) \in C_0^{\infty}(\R^2)$ be a cut-off function such that $\psi = 1$ on the support of $V_0$. We write
$$\psi V_0 (z - L)^{-1} = V_0 (z - L)^{-1} \psi - V_0 (z - L)^{-1} [(D_x - By)^2 + D_y^2, \psi] (z -L)^{-1}$$
$$ = V_0 (z - L)^{-1} \psi - V_0 (z - L)^{-1} \psi_1 [(D_x - By)^2 + D_y^2, \psi] (z -L)^{-1},$$
where $\psi_1$ is a cut-off function equal to 1 on the support of $\psi.$ For $n$ large enough we will have $\Re z = \lambda \geq C'_{\epsilon, \eta}$ for $z \in {\rm supp}\: W_n$ and can treat $V_0 (z - L)^{-1} \psi$ and $V_0 (z - L)^{-1}\psi_1$ as above. On the other hand,
\begin{equation} \label{eq:3.4}
[(D_x - By)^2 + D_y^2, \psi] = -2 \ii \partial_x \psi (D_x - By) - 2 \ii\partial_y \psi D_y - \Delta_{x, y} \psi
\end{equation}
and the operators $\partial_x \psi(D_x - By) (z - L)^{-1}$ and $\partial_y \psi D_y (z - L)^{-1}$ are bounded by $C\eta^{-1}$ for $|\Im z| \geq \eta.$  Indeed, we have
$$(z - L) = (\ii - L)^{-1}[ I +  (\ii - z) (z - L)^{-1}]$$
and it suffices to show that $\partial_x \psi(D_x - By) (\ii - L)^{-1}$ and $\partial_y \psi D_y (\ii - L)^{-1}$ are bounded.
Next, $(\ii - L)^{-1}$ is a pseudodifferential operator 
and the symbol of the pseudodifferential operator $(D_x - By)(\ii - L)^{-1}$ becomes
$$\frac{\xi - By}{\ii - (\xi - By)^2 - \eta^2 - \epsilon x} -\frac{\ii B\eta}{(\ii - (\xi - By)^2 - \eta^2 - \epsilon x)^2}.$$
From the well known results for the $L^2$ boundedness of pseudodifferential operators (see \cite{DS}) we deduce that (\ref{eq:3.4}) is bounded by $C|\Im z|^{-1}.$ Consequently, applying Proposition 2 once more, we can arrange the norm of the operator
$$V_0 (z - L)^{-1} \psi_1 [(D_x - By)^2 + D_y^2, \psi] (z -L)^{-1}$$
to be sufficiently small for $z \in W_n,\: |\Im z | \geq \eta$ and $n \geq n_1 > n_0$.
Combining this with the previous estimates, we get
$|S_n| \leq \epsilon/6$, hence $|R_n + Q_n + S_n| \leq \epsilon/2$ for $n$ large enough.
This implies (\ref{eq:3.2}) and the proof is complete.
\end{proof}
\begin{cor} Assume in addition to $(1.1)$ that $\partial_x^2 V \in C_0(\R^2) \cap L^{\infty}(\R^2)$. Then $H$ has at most finite number of embedded eigenvalues in $\R.$
\end{cor}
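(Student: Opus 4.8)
The plan is to combine Theorem 2 with the finiteness-in-compact-intervals result of \cite{DP3}. By Theorem 2 there is a constant $C > 0$ such that $H$ has no eigenvalues $\lambda$ with $|\lambda| \geq C$; hence all embedded eigenvalues of $H$ in $\R$ lie in the compact interval $[-C, C]$. It therefore suffices to show that $H$ has only finitely many eigenvalues in $[-C, C]$. This is exactly the type of statement established in \cite{DP3}, where it is proved that in every bounded interval $[a,b]$ the operator $H$ has at most finitely many embedded eigenvalues, each of finite multiplicity. The additional hypothesis $\partial_x^2 V \in C_0(\R^2) \cap L^\infty(\R^2)$ is what is needed to run that argument: together with $(1.1)$ for $V$ and $\partial_x V$ it guarantees the trace-class and commutator estimates (in the spirit of Lemma 1 and the operators in $(1.4)$) required to control $f(H)[\partial_x, H]f(H)$ and its localized variants, and in particular to make the Virial-type identity $(3.1)$ usable not only for a single eigenvalue escaping to infinity but for a possibly accumulating sequence inside a fixed compact interval.

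Concretely I would proceed as follows. First, invoke Theorem 2 to confine all embedded eigenvalues to $[-C,C]$. Second, suppose for contradiction that there were infinitely many eigenvalues $\lambda_n \in [-C,C]$; passing to a subsequence we may assume $\lambda_n \to \lambda_\infty \in [-C,C]$ with corresponding orthonormal eigenfunctions $\varphi_n$, so that $\varphi_n \rightharpoonup 0$ weakly. Third, apply the identity $(3.1)$, $f(H)[\partial_x,H]f(H) = \epsilon f^2(H) + f(H)\,\partial_x V\, f(H)$, with a fixed $f \in C_0^\infty(\R)$ equal to $1$ near $\lambda_\infty$, so that $f(H)\varphi_n = \varphi_n$ for all large $n$; pairing with $\varphi_n$ gives $0 = \epsilon + (\varphi_n, \partial_x V\, \varphi_n)$. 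Fourth, show that $(\varphi_n, \partial_x V\, \varphi_n) \to 0$: this is where the decay $(1.1)$ on $\partial_x V$ and the extra regularity $\partial_x^2 V \in C_0 \cap L^\infty$ enter, allowing one to write $\partial_x V\, f(H)$ (equivalently $\partial_x V$ composed with a spectral localizer near $\lambda_\infty$) as a norm-limit of compact operators — using Lemma 1 to get trace-class/Hilbert--Schmidt factors $g_\delta(H_0+\ii)^{-k}$ and a cutoff $\chi_R$ as in Proposition 1 — so that $\partial_x V\, f(H)$ is itself compact, whence $(\varphi_n, \partial_x V\, \varphi_n)\to 0$ by weak convergence $\varphi_n \rightharpoonup 0$. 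This contradicts $\epsilon \neq 0$ and proves finiteness.

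The main obstacle is the fourth step: establishing that the operator $\partial_x V\, f(H)$ (or the relevant localized form of it) is compact, with quantitative control uniform in the spectral parameter near $\lambda_\infty$. The difficulty is the same one discussed in the introduction — the commutator with $\partial_x$ produces terms involving $D_x - By$, which are not relatively bounded in a naive way — so one must route the estimate through the resolvent of $H_0$ as in Lemma 1, using $\langle x\rangle^{-2-\delta}\langle y\rangle^{-1-\delta}(H_0+\ii)^{-2}$ being trace class, and handle the $(D_x - By)$ factors via the pseudodifferential calculus exactly as in the proof of Theorem 2 (cf. the treatment of $(3.4)$). The regularity assumption on $\partial_x^2 V$ is precisely what makes $\partial_x V$ behave well under these manipulations; with it in hand, the compactness and the contradiction follow, and the proof is complete.
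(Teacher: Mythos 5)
Your first paragraph is exactly the paper's proof: combine Theorem~2, which confines any embedded eigenvalue to a fixed interval $[-C,C]$, with Corollary~1 of \cite{DP3}, which gives finiteness (and finite multiplicity) of embedded eigenvalues in every bounded interval; the extra hypothesis $\partial_x^2 V \in C_0(\R^2)\cap L^\infty(\R^2)$ is indeed simply what Corollary~1 of \cite{DP3} requires. The subsequent ``concretely I would proceed as follows'' re-derivation of the \cite{DP3} result via a virial/compactness argument is unnecessary given the citation and, as written, glosses over the domain justifications needed to make the identity $(\varphi_n,[\partial_x,H]\varphi_n)=0$ rigorous (which is precisely where the $\partial_x^2 V$ regularity is consumed in a careful treatment), but this does not affect the correctness of your actual proof, which matches the paper's.
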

This result follows from Theorem 2 and Corollary $1$ in \cite{DP3} which guarantees  that $H$ has at most finite number of embedded eigenvalues in every interval $[a, b] \subset \R.$  The conjecture is that $H$ has no embedded eigenvalues on $\R$.

\section{Asymptotics of the spectral shift function}

 Our purpose in this section is to apply Theorem 1 and (2.4) to give a Weyl type  asymptotics with optimal remainder estimates for the spectral shift function $\xi_h(\lambda):=\xi(\lambda;H(h), H_0(h))$  corresponding to the operators
$$H(h)=(h D_x- y)^2+h^2D_y^2+ x,\:H_0(h) = H(h) + V(x, y),\: h > 0.$$
For simplicity of the exposition in this section we assume that $B = \epsilon = 1.$ 
Let $p_2(x, y, \zeta, \eta) = (\zeta - y)^2 + \eta^2 + x + V(x, y).$
For the analysis of $\xi_h(\lambda)$ we need the following theorems.

\begin{thm}
Let $\psi\in C^ \infty_0(\R^2)$ and let $f\in C^\infty_0(]0,+\infty[;\R)$. Then we have
\begin{equation}\label{eq:4.1}
{\rm tr}\Bigl[ \psi f(H(h))\Bigr]\sim\sum_{j=0}^ \infty a_j h^{j-2},\,\, h\searrow 0,
\end{equation}
with
\begin{equation}\label{eq:4.2}
a_0={1\over (2\pi)^ 2} \iint\psi(x,y) f(p_2(x,y,\zeta, \eta)) dx dyd\zeta d\eta.
\end{equation}
\end{thm}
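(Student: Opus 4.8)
The plan is to prove the trace expansion \eqref{eq:4.1} by the standard Helffer--Sj\"ostrand functional calculus combined with a semiclassical parametrix construction for the resolvent $(z-H(h))^{-1}$, where $z$ ranges over a complex neighbourhood of $\operatorname{supp} f \subset \,]0,+\infty[$. First I would write
$$
\psi f(H(h)) = -\frac{1}{\pi}\int \bar\partial\tilde f(z)\,\psi\,(z-H(h))^{-1}\,L(dz),
$$
with $\tilde f$ an almost analytic extension of $f$ satisfying $\bar\partial\tilde f(z)=\mathcal O(|\Im z|^\infty)$. Since the symbol $p_2(x,y,\zeta,\eta)=(\zeta-y)^2+\eta^2+x+V(x,y)$ is elliptic in the region $\{p_2 = \operatorname{Re} z\}$ relative to the large parameter (the effective small parameter being $h$ after the semiclassical rescaling $hD_x, hD_y$), one constructs a semiclassical pseudodifferential approximation $E_N(z,h)$ to $(z-H(h))^{-1}$ whose symbol is obtained by the usual Neumann-type iteration: the principal symbol is $(z-p_2)^{-1}$ and the lower-order terms are $\sum_{k\le N} h^k q_k(x,y,\zeta,\eta;z)$ with $q_k$ a finite sum of terms of the form $(z-p_2)^{-k-1}$ times polynomials in the derivatives of $p_2$. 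The error $(z-H(h))^{-1}-E_N(z,h)$ is $\mathcal O(h^{N+1}|\Im z|^{-M_N})$ in suitable weighted norms, and after integrating against $\bar\partial\tilde f$ (which kills all negative powers of $|\Im z|$) its contribution is $\mathcal O(h^{N+1})$, hence absorbed into the remainder.

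Next I would compute $\operatorname{tr}\bigl[\psi E_N(z,h)\bigr]$ via the semiclassical symbolic trace formula
$$
\operatorname{tr}\bigl[\operatorname{Op}_h(a)\bigr] = \frac{1}{(2\pi h)^2}\iint a(x,y,\zeta,\eta)\,dx\,dy\,d\zeta\,d\eta,
$$
applied to the (compactly supported in $(x,y)$, because of $\psi$, and rapidly decaying in $(\zeta,\eta)$) symbol of $\psi E_N(z,h)$. This yields
$$
\operatorname{tr}\bigl[\psi f(H(h))\bigr] = -\frac{1}{\pi}\sum_{k=0}^{N}\frac{h^{k-2}}{(2\pi)^2}\int\!\!\iint \bar\partial\tilde f(z)\,\psi(x,y)\,q_k(x,y,\zeta,\eta;z)\,dx\,dy\,d\zeta\,d\eta\,L(dz) + \mathcal O(h^{N-1}),
$$
and performing the $z$-integral first, using $-\frac1\pi\int\bar\partial\tilde f(z)(z-p_2)^{-k-1}L(dz) = \frac{1}{k!}f^{(k)}(p_2)$, identifies each coefficient $a_j$ as an explicit phase-space integral; for $j=0$ only the principal symbol $(z-p_2)^{-1}$ contributes and one gets exactly \eqref{eq:4.2}. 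The intermediate half-integer powers vanish because the parametrix has an integer-power expansion in $h$.

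The main obstacle is the justification of the trace-norm estimates on the remainder term $(z-H(h))^{-1}-E_N(z,h)$ together with the fact that $\psi$ only localizes in $(x,y)$ and not in momentum: one must exploit the $x$-ellipticity coming from the Stark term $x$ and the $\eta^2$ term to gain decay in $(\zeta,\eta)$, i.e. show that $\psi\,(z-H(h))^{-1}$ and its parametrix differences lie in $h^{-2}$ times trace class with the right powers of $|\Im z|^{-1}$, uniformly for $z$ in the relevant strip. This is where Lemma 1 (in its $H_0(h)$ version, already invoked in the excerpt) enters: writing $\psi(z-H(h))^{-1} = \psi(H_0(h)+\mathbf i)^{-2}\cdot(H_0(h)+\mathbf i)^2(z-H(h))^{-1}$ and controlling the second factor in operator norm by $\mathcal O(h^{-?}|\Im z|^{-M})$ reduces the trace-class bookkeeping to the fixed trace-class operator $\psi(H_0(h)+\mathbf i)^{-2}$, whose trace norm is $\mathcal O(h^{-2})$. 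Once these uniform bounds are in place, the expansion and the identification of $a_0$ follow by routine, if slightly tedious, symbol calculus.
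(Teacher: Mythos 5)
Your approach is genuinely different from the paper's, and it has a concrete gap. The paper does not construct a parametrix for $(z-H(h))^{-1}$ at all. Instead it introduces the cutoff operator
$$\tilde H(h)=(hD_x-G(x,y)\,y)^2+h^2D_y^2+G(x,y)\,x+V(x,y),\qquad \psi\prec G\in C_0^\infty(\R^2),$$
whose symbol is bounded with bounded derivatives and which is semi-bounded from below; for such an operator and $f\in C_0^\infty(]0,\infty[)$, the trace asymptotics of $\psi f(\tilde H(h))$ are standard (Dimassi--Sj\"ostrand, Chapters~8, 9, 12). The whole content of the proof is then the localization estimate $\tr[\psi(f(H(h))-f(\tilde H(h)))]=\mathcal O(h^\infty)$, which is obtained by a commutator telescope: since $\tilde H(h)-H(h)$ is supported away from $\operatorname{supp}\psi$, one inserts a chain of cutoffs $\chi_1\prec\cdots\prec\chi_N$ with $\chi_k\tilde\psi=0$ and gains a factor $h$ per commutator, plus a fixed trace-class factor $\|(-h^2\Delta+1)^{-N/2}\tilde\psi\|_{\tr}=\mathcal O(h^{-2})$. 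That device sidesteps precisely the two points you identified as ``the main obstacle.''

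The specific gap in your sketch is the factorization
$\psi(z-H(h))^{-1}=\psi(H_0(h)+\mathbf i)^{-2}\cdot(H_0(h)+\mathbf i)^2(z-H(h))^{-1}$.
The second factor is \emph{not} a bounded operator: since $V$ is merely bounded, $(z-H(h))^{-1}$ maps $L^2$ into $D(H_0(h))$ but not into $D(H_0(h)^2)$, so $(H_0(h)+\mathbf i)^2(z-H(h))^{-1}$ is unbounded. Using only one power of $(H_0(h)+\mathbf i)^{-1}$ gives a Hilbert--Schmidt, not trace-class, factor, and passing to higher powers of the resolvent in the Helffer--Sj\"ostrand formula does not help, because $D(H(h)^2)\neq D(H_0(h)^2)$ when $V$ is not smooth against $H_0$. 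Beyond this, your parametrix construction requires the symbol $(z-p_2)^{-1}$ to lie in a uniform symbol class, but $p_2$ is unbounded in $x$ and $y$ and the characteristic set $\{p_2=\Re z\}$ is non-compact; controlling the trace of $\psi E_N(z,h)$ and of the remainder requires more than the estimates you quote. The paper's replacement of $H(h)$ by $\tilde H(h)$ is exactly what reduces the problem to the well-posed compactly-supported-symbol setting where these issues disappear, so if you want to make your route work you would have to supply a genuinely different trace-class argument in place of the invalid factorization, and a global parametrix for an operator with unbounded, non-elliptic symbol — both substantial pieces of work that the paper's localization trick avoids.
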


\medskip

\begin {thm}  Assume that $\psi\in C^\infty_0(\R^2)$.  Let   $f \in C^\infty_0(\lbrack E_0, E_1 \lbrack)$ and $\theta\in C^\infty_0(\rbrack-{1\over C_0}, {1\over C_0}\lbrack;\:{\R}),\:\: \theta = 1$ in a neighborhood of $0$. Assume that if $p_2(x,y, \zeta, \eta) = \tau,\: \tau \in [E_0, E_1]$, then $dp_2 \neq 0.$ Then  there exists $C_0>0$ such that for all $N, m\in \N$ there exists $h_0>0$ such that
\begin{equation} \label{eq:4.3}
{\tr}\Bigl(\psi\breve\theta_h(\tau-H(h))f(H(h))\Bigr)
=(2 \pi h)^{-2} \Bigl( f(\tau)\sum_{j=0}^{N -1}\gamma_j(\tau) h^j +{\mathcal O}(h^N\langle \tau\rangle^{-m})\Bigr),
\end{equation}
uniformly with respect to $\tau \in \R$ and  $h\in ]0,h_0]$,
where
$$\gamma_0(\tau) = -(2\pi \ii)^{-1}\int\int_{\R^{4}}
\psi(x,y)\Bigl((\tau + \ii 0 - p_2(x,y,\zeta, \eta))^{-1}- (\tau - \ii 0 - p_2(x,y, \zeta, \eta))^{-1}\Bigr) dxdy d\zeta d\eta.$$
Here $$\breve\theta_h(\tau)= (2\pi h)^{-1} \int e^{i\tau t/h} \theta(t)dt.$$

\end{thm}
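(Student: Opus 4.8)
The plan is to reduce the statement to a standard semiclassical trace formula with a cutoff in the energy localization, following the Helffer--Sjöstrand / Robert approach to such asymptotics. First I would observe that $\breve\theta_h(\tau-H(h))$ is, up to the normalization, the operator $(2\pi h)^{-1}\int e^{i\tau t/h}\theta(t)e^{-itH(h)/h}\,dt$, so the quantity to be estimated is $(2\pi h)^{-1}\int e^{i\tau t/h}\theta(t)\,\tr\bigl(\psi e^{-itH(h)/h}f(H(h))\bigr)\,dt$. Since $\theta$ is supported in a small neighborhood of $0$ and $f$ localizes the spectrum in the compact set $[E_0,E_1[$ on which $dp_2\neq 0$, there are no periodic trajectories of the Hamilton flow of $p_2$ at the relevant energies other than the trivial contribution from $t=0$. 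The trace-class property needed to even write this down comes from Lemma 1 (in its $h$-dependent form, as noted in Section 2): $\psi(H_0(h)+\ii)^{-2}$ is trace class, hence $\psi f(H(h))$ is trace class for $f\in C_0^\infty$, and the same bound is uniform in $h$ after the standard rescaling.

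The main technical step is the construction of a semiclassical parametrix for $e^{-itH(h)/h}$ microlocally near the support of $\psi$, valid for $|t|$ small. Because $p_2(x,y,\zeta,\eta)=(\zeta-y)^2+\eta^2+x+V(x,y)$ is a nontrapping-type symbol on bounded energy shells once $dp_2\ne 0$ (the $x$-direction is genuinely Stark-like, $\partial_\xi$ of the principal part never degenerates jointly with the gradient), I would build the phase $\varphi(t,x,y,\zeta,\eta)$ solving the eikonal equation $\partial_t\varphi + p_2(x,y,\partial_{(x,y)}\varphi)=0$ with $\varphi|_{t=0}=(x-\tilde x)\zeta+(y-\tilde y)\eta$, and amplitudes $a_j$ solving the transport equations, giving an FIO representation of $\psi e^{-itH(h)/h}f(H(h))$ modulo $O(h^\infty)$ errors in trace norm. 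Taking the trace amounts to a stationary phase computation in $(t,x,y,\zeta,\eta)$ with phase $\tau t + \varphi(t,x,y,\zeta,\eta)-(x\zeta+y\eta)$ at the critical set $t=0$, $p_2=\tau$; the nondegeneracy of the Hessian in $t$ is exactly $dp_2\neq 0$. This produces the asymptotic expansion $(2\pi h)^{-2}f(\tau)\sum_j\gamma_j(\tau)h^j$, and a direct inspection of the leading term of the stationary phase identifies $\gamma_0(\tau)$ with the stated spectral density $-(2\pi\ii)^{-1}\iint\psi\bigl((\tau+\ii 0-p_2)^{-1}-(\tau-\ii 0-p_2)^{-1}\bigr)$, i.e. $2\pi$ times the on-shell delta $\iint\psi(x,y)\delta(\tau-p_2(x,y,\zeta,\eta))\,dx\,dy\,d\zeta\,d\eta$.

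The uniformity in $\tau\in\R$ with the weight $\langle\tau\rangle^{-m}$ is the delicate point, and I expect it to be the main obstacle. For $\tau$ in a fixed compact set it is classical, but for $|\tau|\to\infty$ one must exploit the Stark term: the symbol $p_2-\tau$ is elliptic in a conic region once $|x|\gtrsim\langle\tau\rangle$ outside the support of $\psi$, and more importantly, on the support of the compactly supported $\psi$ the energy shell $\{p_2=\tau\}\cap\supp\psi$ is empty for $|\tau|$ large (since $x+V$ is bounded there), so the whole trace is $O(h^\infty\langle\tau\rangle^{-m})$ trivially — but one has to make the parametrix error estimates uniform down to that regime. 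I would handle this by inserting an extra energy cutoff $g(H(h))$ with $g\in C_0^\infty$, $g=1$ near $\supp f$, noting $\psi g(H(h))$ has trace norm $O(h^{-2}\langle\tau\rangle^{-m})$ uniformly (again from the $h$-scaled Lemma 1 applied with weights $\langle x\rangle^{-2-\delta}\langle y\rangle^{-1-\delta}$ and the support of $\psi$), and then running the stationary phase argument with all constants tracked against $\langle\tau\rangle$. The remaining ingredients — that $\bar\partial\tilde f = O(|\Im z|^\infty)$ can be used to control resolvent expansions, and the pseudodifferential boundedness estimates of the type already invoked in Section 3 — are routine and I would not spell them out.
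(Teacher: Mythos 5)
Your proposal takes a genuinely different route from the paper's, and it has a gap at the central step. The paper does not construct a parametrix for $e^{-itH(h)/h}$. Its key move is to introduce the modified operator
$\tilde H(h)=(hD_x-G(x,y)y)^2+h^2D_y^2+G(x,y)x+V(x,y)$, with $G\in C_0^\infty(\R^2)$ chosen so that $\psi\prec G$. The cutoff $G$ tames both the magnetic and the Stark terms, so $\tilde H(h)$ is an elliptic, semi-bounded $h$-pseudodifferential operator. Using the Helffer--Sj\"ostrand formula, the resolvent identity comparing $(z-H(h))^{-1}$ with $(z-\tilde H(h))^{-1}$, and a chain of nested cutoffs $\psi_1\prec\chi_1\prec\cdots\prec\chi_N$ with $\chi_i\tilde\psi=0$, the paper proves that
${\rm tr}\bigl(\psi\breve\theta_h(\tau-H(h))f(H(h))\bigr)={\rm tr}\bigl(\psi\breve\theta_h(\tau-\tilde H(h))f(\tilde H(h))\bigr)+{\mathcal O}(h^\infty)$,
and then simply quotes the standard semiclassical trace asymptotics of Dimassi--Sj\"ostrand and Robert for the nice operator $\tilde H(h)$. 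All the actual asymptotic computation is outsourced to references once this reduction is made.

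Your proposal instead builds an FIO parametrix for $e^{-itH(h)/h}$ and performs stationary phase. The problem lies in the assertion that this gives a representation of $\psi e^{-itH(h)/h}f(H(h))$ ``modulo $O(h^\infty)$ errors in trace norm.'' The operator $H(h)$ is not semi-bounded (the full symbol $p_2$ is unbounded below in $x$), so it is not an elliptic operator in any weighted symbol class, and the standard machinery of [DS] for the trace-norm remainder of such parametrices --- together with the Helffer--Sj\"ostrand functional calculus compatible with it --- does not apply directly. Your sketch does not say how the trace-norm control on the parametrix error is obtained in this non-semi-bounded setting; but that is exactly the difficulty the paper's substitution $H(h)\rightsquigarrow\tilde H(h)$ is designed to sidestep. (A minor further imprecision: the trace norm of $\psi(H_0(h)+\ii)^{-2}$ is not uniform in $h$; it is ${\mathcal O}(h^{-2})$, and tracking this is part of the bookkeeping.) The uniformity in $\tau$ with the weight $\langle\tau\rangle^{-m}$, which you rightly flag as delicate, also comes essentially for free in the paper once the elliptic semi-bounded $\tilde H(h)$ is in hand. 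A direct parametrix argument in the spirit of [DP1] could probably be made to work, but as written your proposal leaves a genuine gap precisely where the paper's cutoff construction does the real work.
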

\medskip

{\it Proof of Theorem $3$ and Theorem $4$.} Here and below $\psi \prec \varphi$ means that $\varphi(x)=1$ on the support of $\psi$.
Let $G$  $\in C^{\infty}_0(\R^2)$ with  $\psi  \prec  G$.
 Introduce the operator
$$ \tilde H(h) =(hD_x-G(x,y)y)^2+h^2D_y^2+G(x,y)x + V(x,y),$$
and set
$$I ={\tr}\Bigl[ \psi \Bigl(f(H(h))-f(\tilde H(h)\Bigr)\Bigr].$$
 Let $\tilde{f}(z) \in C_0^{\infty}(\C)$ be an almost analytic continuation of $f$ with $\bar{\partial}_z \tilde{f}(z) ={\mathcal O}(\vert \Im z\vert^\infty)$ .
 From Helffer-Sj\"ostrand formula it follows    that
$$
I =\frac{1}{\pi} \int  \bar{\partial}_z \tilde f(z) 
\tr\Bigl[\psi \Bigl((z-\tilde H(h))^{-1}- (z- H(h))^{-1}\Bigr) \Bigr] L(dz),$$
where $L(dz)$ denotes the Lebesgue measure on $\C.$

 Let $\psi_1\in C^\infty(\R^2)$ be a function with $ \psi_1=1$ near 
${\rm supp}\:\:(1-G)$ and $\psi_1=0$ near ${\rm supp}\:\psi$, and let $\tilde \psi\in C^\infty_0(\R^2)$ be equal to  one near ${\rm supp}(\nabla\psi_1)$ and $\tilde\psi=0$
near ${\rm supp}\:\psi$. A simple calculus shows that $\tilde H(h)-H(h)= \psi_1 (\tilde H(h)-H(h))$ and $[\tilde H(h), \psi_1]=\tilde\psi[\tilde H(h), \psi_1]\tilde H$.
Then
\begin{equation}\label{eq:4.4}
\psi\Bigl((z-\tilde H(h))^{-1}-(z-H(h))^{-1}\Bigr)= \psi
(z- \tilde H(h))^{-1} \psi_1(\tilde H(h)-H(h))(z- H(h))^{-1}
\end{equation}
$$=\psi(z-\tilde H(h))^{-1}\tilde\psi[\tilde H(h),\psi_1](z-\tilde H(h))^{-1}  (\tilde H(h)-H(h)) (z-H(h))^{-1}.$$
Let $\chi_1,...,\chi_N \in C_0^\infty(\R^2;\lbrack 0,1\rbrack)$ 
with $\psi_1\prec \chi_1\prec...\prec \chi_N$ and $\chi_i\tilde\psi=0,\: i = 1,...,N$. 
By using  the equalities $\chi_1 \psi_1 =...=\chi_N\psi_1=\psi_1,\: \chi_k\: \tilde \psi=0$,  $\chi_{k-1}[\chi_k,\tilde H(h)]=0$ and the fact that
 $$[\chi_k,(z-\tilde H(h))^{-1}]=
(z-\tilde H(h))^{-1} [\chi_k,\tilde H(h)](z-\tilde H(h))^{-1},$$
 we get
$$       \psi(z-\tilde H(h))^{-1}\tilde\psi[\tilde H(h),\psi_1]  $$
$$ =\psi(z-\tilde H(h))^{-1}
[\chi_1,\tilde H(h)](z-\tilde H(h))^{-1}...[\chi_N,\tilde H(h)]
(z-\tilde H(h))^{-1} \tilde\psi[\tilde H(h),\psi_1]     =:L_N(h).$$

Here
$$L_N(h) ={\mathcal O}_N(1)\Bigl( {h^N\over \vert\Im z\vert^N}\Bigr): H^{s}(\R^2)\rightarrow H^{s+N}(\R^2),$$
where we equip $H^N(\R^2)$ with the $h$-dependent norm $\Vert \langle hD\rangle^N u\Vert_{L^2}$.  Choose  $N > 2$ and let $s=-N$.
According to  Theorem 9.4 of \cite{DS}, we have
$$\Bigl\|\Bigr(-h^2\Delta+ 1\Bigr)^{-N/2}\tilde\psi\Bigr\|_{\tr}={\mathcal O}(h^{-2}).$$
Then
\begin{equation} \label{eq:4.5} 
\|    \psi(z-\tilde H(h))^{-1}\tilde\psi[\tilde H(h),\psi_1]  \tilde{\psi}\|_{\tr} =
 \Bigl\| L_{N}(h)     \Bigr(-h^2\Delta + 1\Bigr)^{N/2}  \Bigr(-h^2\Delta + 1\Bigr)^{-N/2}     
  \tilde\psi \Bigr\|_{\tr}
\end{equation}
$$\leq C \Bigl\|\Bigr(-h^2\Delta+ 1\Bigr)^{-N/2}  \tilde\psi\Bigr\|_{\tr} \Bigr({h^N\over \vert\Im z\vert^N}\Bigr) \leq C_1 \Bigl({h^{N-2}\over \vert\Im z\vert^N}\Bigr).$$
Combining this with (4.4) and using the fact that 
$$\Vert (z-\tilde H(h))^{-1}  (\tilde H(h)-H(h)) (z-H(h))^{-1}\Vert=\Vert (z-\tilde H(h))^{-1}-(z-H(h))^{-1}\Vert=
{\mathcal O} \Bigl(\vert \Im z\vert^{-1}\Bigr),$$
we obtain
$$\Bigl\|\psi\Bigl((z-\tilde H(h))^{-1}-(z-H(h))^{-1}\Bigr)\Bigr\|_{{\rm tr}}={\mathcal O} \Bigl({h^{N-2}\over \vert\Im z\vert^{N+1}}\Bigr).$$
Since  $\bar{\partial}_z \tilde{f}(z) = {\mathcal O}(|\Im z|^\infty)$, we have 
$$I= {\mathcal O}(h^\infty).$$

Summing up, we have proved that
\begin{equation} \label{eq:4.6}
{\tr}\Bigl(\psi f(H(h))\Bigr)
={\tr}\Bigl(\psi f(\tilde H(h))\Bigr) +{\mathcal O}(h^\infty).
\end{equation}
In the same way, we obtain
\begin{equation}\label{eq:4.7}
{\tr}\Bigl(\psi\breve\theta_h(\tau-H(h))f(H(h))\Bigr)
={\tr}\Bigl( \psi
 \breve\theta_h(\tau-\tilde H(h))f(\tilde H
(h))\Bigr) + {\mathcal O}(h^\infty).
\end{equation}

The operator $\tilde H(h)$ is elliptic semi-bounded $h$-pseudodifferential operator, so Theorem 3 and Theorem 4  follow from the $h$-pseudodifferential calculus and the analysis of elliptic operators in Chapters 8, 9, 12 in \cite{DS} (see also \cite{R}). The calculus of the leading terms is given by  trivial modification of the argument of Section 7 in  \cite{DP1} and we omit the details.
\hfill{$\Box$}\\
\begin{rem}
Notice that $dp_2\not=0$ on $p_2=\tau$ is equivalent to
\begin{equation} \label{eq:4.8}
 \nabla_{x,y}(x+V(x,y))\not=0,\,\,\,{\rm on}\,\,\,\, \{(x,y);\,\,\, x+V(x,y)=\tau\}.
\end{equation}
\end{rem}

Now we will apply  Theorem 3 and Theorem 4  to obtain a 
Weyl-type asymptotics for $\xi_h(\lambda)$ when $h \searrow 0.$ 
\begin{thm}
Assume that $V\in C^\infty_0(\R^2)$ and suppose that $(\ref{eq:4.8})$  holds for  $\tau =\lambda_1,\lambda_2$.
Then there exists $h_0 > 0$ such that for $h \in ]0,h_0]$ we have
\begin{equation}\label{eq:4.9}
\xi_h(\lambda_2)-\xi_h(\lambda_1)=(2\pi h)^{-2} (c_0(\lambda_2)-c_0(\lambda_1))+{\mathcal O}(h^{-1}),
\end{equation}
where  
\begin{equation}\label{eq:4.10}
c_0(\lambda)= -\pi \:\int_{\R^2} \partial_x V(x,y)(\lambda - x - V(x,y))_{+} dxdy.
\end{equation}
\end{thm}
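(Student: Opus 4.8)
The plan is to combine the representation formula (\ref{eq:2.4}) with the semiclassical trace asymptotics of Theorems 3 and 4. Since $V \in C_0^\infty(\R^2)$, the hypotheses of Theorem 1 and of (\ref{eq:2.4}) hold, so for every $f \in C_0^\infty(\R)$ we have
$$\langle \xi_h', f \rangle = \tr\Bigl(f(H(h)) - f(H_0(h))\Bigr) = -\frac{1}{\epsilon}\tr\Bigl((\partial_x V) f(H(h))\Bigr).$$
Because $\partial_x V \in C_0^\infty(\R^2)$, we may take $\psi = \partial_x V$ in Theorem 3 (after cutting $f$ down to a compactly supported function in $]0,+\infty[$; the interval $[\lambda_1,\lambda_2]$ should be assumed positive, or one translates $x$ so that $x + V > 0$ on the relevant energy shell). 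This gives $\tr\bigl((\partial_x V) f(H(h))\bigr) \sim \sum_j a_j h^{j-2}$ with
$$a_0 = \frac{1}{(2\pi)^2}\iint \partial_x V(x,y)\, f\bigl(p_2(x,y,\zeta,\eta)\bigr)\, dx\,dy\,d\zeta\,d\eta.$$
Hence $\langle \xi_h', f\rangle = -(2\pi h)^{-2} a_0 + \mathcal{O}(h^{-1})$, and this identifies $\xi_h'(\lambda)$ up to lower order with a classical phase-space density.

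The next step is to carry out the $\zeta,\eta$ integration in $a_0$ to obtain an explicit expression in $\lambda$. Writing $p_2 = (\zeta - y)^2 + \eta^2 + x + V(x,y)$ and changing variables $\zeta \mapsto \zeta - y$, the inner integral $\iint f\bigl((\zeta)^2 + \eta^2 + x + V\bigr)\,d\zeta\,d\eta$ equals $\pi \int_0^\infty f(r + x + V)\,dr = \pi \int_{x+V}^\infty f(s)\,ds$. Therefore
$$a_0 = \frac{\pi}{(2\pi)^2}\int_{\R^2}\partial_x V(x,y)\Bigl(\int_{x+V(x,y)}^\infty f(s)\,ds\Bigr)dx\,dy.$$
Now I integrate against $f$: since $\xi_h(\lambda_2) - \xi_h(\lambda_1) = \int \xi_h'(\lambda)\,d\lambda$ over a suitable approximating sequence, and $\frac{d}{d\lambda}\bigl[-(\lambda - x - V)_+\bigr] = -\Bbbone_{\{\lambda > x+V\}}$, one recognizes that the primitive of $-\int_{x+V}^\infty f(s)\,ds$ in the appropriate sense produces $(\lambda - x - V)_+$. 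Taking $f$ to approximate $\Bbbone_{[\lambda_1,\lambda_2]}$ (legitimate because the remainder $\mathcal{O}(h^{-1})$ in the pointwise asymptotics of $\xi_h$ is controlled using Theorem 4, which provides the Tauberian input with the sharp remainder — this is exactly why Theorem 4 is stated), one arrives at
$$\xi_h(\lambda_2) - \xi_h(\lambda_1) = -(2\pi h)^{-2}\cdot \pi \int_{\R^2}\partial_x V(x,y)\bigl[(\lambda_2 - x - V)_+ - (\lambda_1 - x - V)_+\bigr]dx\,dy + \mathcal{O}(h^{-1}),$$
which is (\ref{eq:4.9})--(\ref{eq:4.10}) with $\epsilon = 1$.

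The main obstacle is the passage from the smooth-$f$ asymptotics to the difference $\xi_h(\lambda_2) - \xi_h(\lambda_1)$ with remainder $\mathcal{O}(h^{-1})$: one cannot simply plug $f = \Bbbone_{[\lambda_1,\lambda_2]}$ into Theorem 3. The standard route is a Tauberian argument — combining the trace asymptotics of Theorem 3 with the local two-term (in fact full) expansion of $\tr\bigl(\psi \breve\theta_h(\tau - H(h)) f(H(h))\bigr)$ from Theorem 4, whose remainder $\mathcal{O}(h^N\langle\tau\rangle^{-m})$ controls the contribution of energies near $\lambda_1$ and $\lambda_2$ — to deduce that $\xi_h(\lambda) = (2\pi h)^{-2} c_0(\lambda) + \mathcal{O}(h^{-1})$ pointwise at the non-critical energies $\lambda_1, \lambda_2$, where the hypothesis (\ref{eq:4.8}) that $dp_2 \neq 0$ on $\{p_2 = \lambda_i\}$ rules out the logarithmic losses that would otherwise appear. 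The condition $dp_2 \neq 0$ is precisely what makes the Weyl remainder sharp, and verifying that $\partial_x V$ as the cut-off $\psi$ does not interfere with this (its support may fail to cover the whole energy surface, but outside $\supp \partial_x V$ the integrand vanishes, so the effective energy surface is $\{x + V = \lambda\}\cap \supp\partial_x V$, on which (\ref{eq:4.8}) still gives the needed non-degeneracy) is the one point requiring a little care. Once the sharp pointwise asymptotics for $\xi_h$ is in hand, formula (\ref{eq:4.9}) follows by subtraction and the explicit computation of $c_0(\lambda)$ above.
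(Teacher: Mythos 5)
Your proposal identifies the right ingredients — the representation $\langle\xi_h',f\rangle = -\tr\bigl((\partial_x V)f(H(h))\bigr)$ from (\ref{eq:2.4}), Theorem 3 for the leading term, Theorem 4 as the Tauberian input, and the explicit computation of $c_0(\lambda)$ via the $\zeta,\eta$ integration — and this computation is correct. But there is a genuine gap at precisely the step you flag as ``requiring a little care.'' The Tauberian argument you invoke (convolving with $\breve\theta_h$, integrating, and passing to sharp asymptotics at $\lambda_1,\lambda_2$) is only valid for a \emph{monotone} function, i.e.\ when the derivative is a positive measure. The standard estimate $\sigma(\lambda) - (\breve\theta_h\ast\sigma)(\lambda) = \mathcal{O}(h^{-1})$ rests on bounding $|\sigma(\lambda) - \sigma(\mu)|$ for $|\lambda-\mu| \lesssim h$ by the local Tauberian inequality, and this bound fails without monotonicity. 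Here $\xi_h'(\lambda) = -\tr\bigl((\partial_x V)\,dE_{H(h)}(\lambda)\bigr)$ is a \emph{signed} measure because $\partial_x V$ changes sign, so the Tauberian machinery does not apply to $\xi_h$ directly, and your passage to $f \to \Bbbone_{[\lambda_1,\lambda_2]}$ is not justified.

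The paper's resolution of this is the essential new idea you are missing: choose $M \geq \|\partial_x V\|_\infty + 1$ and a cutoff $\psi$ with $\partial_x V \prec \psi^2$, and split
$$-\tr\bigl((\partial_x V)f(H(h))\bigr) = \tr\bigl((M-\partial_x V)^{1/2}\psi f(H(h))\psi(M-\partial_x V)^{1/2}\bigr) - M\,\tr\bigl(\psi f(H(h))\psi\bigr) =: \langle\xi_1',f\rangle - \langle\xi_2',f\rangle.$$
Both functionals are manifestly nonnegative for $f \geq 0$, so $\xi_1$ and $\xi_2$ are individually monotone, and the Tauberian argument (via Theorem 4 with $N=1$, $m=2$, together with the estimate (\ref{eq:4.13}) and the tail bound (\ref{eq:4.15})) applies to each of them separately. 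One then recombines via $\xi_h = \xi_1 - \xi_2$ and the expansion coefficients subtract to give $-\partial_x V$, yielding (\ref{eq:4.16}). A further organizational point you omit is the partition $\varphi_1+\varphi_2+\varphi_3 = 1$ on $[\lambda_1,\lambda_2]$: the sharp Tauberian argument is only needed near the endpoints $\lambda_1,\lambda_2$ (via $\varphi_1,\varphi_2$), while in the interior $\varphi_3 \in C_0^\infty(]\lambda_1,\lambda_2[)$ one applies Theorem 3 directly without any Tauberian step. Without the $M$-decomposition your proof sketch does not go through; with it, the rest of your computation of $c_0(\lambda)$ is in agreement with the paper.
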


\begin{proof}

Choose a large constant $M$ such that 
$$M\geq \Vert \partial_x V\Vert_\infty+1.$$
Let $\psi\in C^\infty_0(\R^2;[0,1])$ with $\partial_x V \prec  \psi^2$.
According to (\ref{eq:2.4}), by using the cyclicity of the trace, we get
$$\langle \xi_h', f \rangle = {\rm tr}\: \Bigl(f(H(h)) - f(H_0(h)\Bigr) = -{\rm tr}\Bigl((\partial_x V) f(H(h))\Bigr)$$
$$={\rm tr}\Bigl((M-\partial_x V )^{1/2} \psi f(H(h) \psi (M-\partial_x V )^{1/2} \Bigr)-M
{\rm tr}\Bigl( \psi f(H(h))\psi\Bigr)$$
$$=:\langle \xi'_1,f\rangle-\langle \xi'_2,f\rangle.$$
Since 
$$f \rightarrow {\rm tr}\Bigl((M-\partial_{x_1}V )^{1/2} \psi f(H(h) \psi (M-\partial_{x_1}V )^{1/2} \Bigr)$$
 and 
$$f \rightarrow M{\rm tr}\Bigl( \psi f(H(h))\psi\Bigr)$$
are positive functions for $f\geq 0$, we deduce that the functions $\lambda \rightarrow \xi_i(\lambda), \: i = 1,2$ are monotonic. 

Consequently, we may apply Tauberian arguments for the analysis of the asymptotics of $\xi_i(\lambda), \: i = 1,2.$ 
We treat below $\xi_2(\lambda).$ Let $\varphi \in C_0^{\infty}(\R),\: \varphi \geq 0,$  and suppose that  (\ref{eq:4.8}) holds for all 
$\tau\in {\rm supp}\: \varphi.$ Consider the function
$$F_{\varphi}(\lambda) = \int_{-\infty}^\lambda \xi_2'(\mu) \varphi(\mu) d\mu.$$
Applying (4.3) with $N=1$ and $m=2$, we obtain
\begin{equation}\label{eq:4.11}
\frac{d}{d\lambda} (\breve{\theta}_h \ast F_{\varphi})(\lambda)= \int \breve{\theta}_h(\lambda - \mu) \xi_2'(\mu) \varphi(\mu) d\mu =(2\pi h)^{-2} \Bigl(\varphi(\lambda)\gamma_0(\lambda)+{\mathcal O}\Bigl(\frac{h}{ \langle \lambda\rangle^2}\Bigr)\Bigr).
\end{equation}

We integrate from $-\infty $ to $\lambda$  and we get
\begin{equation}\label{eq:4.12}
\int\Bigl(\int_{-\infty}^\lambda \breve{\theta}_h(\lambda'-\mu) d\lambda' \Bigr)\xi'_2(\mu) \varphi(\mu) d\mu
\end{equation}
$$= {1\over (2\pi h)^2} \Bigl( \int\int_{ p_2\leq \lambda} M \psi^2(x,y) \varphi(p_2) dxdyd\eta d\zeta+{\mathcal O}(h)\Bigr).$$
In the following we choose $\theta\in C^\infty_0(\R)$ with $\breve{\theta}_h\geq 0$. Let $h\breve{\theta}_h(0) = \frac{1}{2\pi}\int_\R \theta(u) du \geq 2C_1 > 0.$
Therefore, it follows that there exist $C_2 > 0$ such that
 $$\vert t\vert <{h\over C_2}\Longrightarrow  h\breve\theta_h(t)\geq C_1.$$
Combining this with the fact that $\breve\theta_h\geq 0$, and using $\langle \xi'_2, f\rangle\geq 0$ for $f\geq 0$,  we obtain
\begin{equation}\label{eq:4.13}
C_1 \int_{\{\vert \lambda -\mu\vert<{h\over C_0}\}} \xi'_2(\mu)\varphi(\mu)d\mu \leq 
h\int_{\{\vert \lambda -\mu\vert<{h\over C_0}\}}\breve{\theta}_h(\lambda - \mu) \xi_2'(\mu) \varphi(\mu) d\mu$$
$$\leq h \int_{\R}\breve{\theta}_h(\lambda - \mu) \xi_2'(\mu) \varphi(\mu) d\mu=
h\frac{d}{d\lambda} (\breve{\theta}_h \ast F_{\varphi})(\lambda) ={\mathcal O}(h^{-1}),
\end{equation}
uniformly with respect to  $\lambda\in\R$. On the other hand, a simple calculus shows that
\begin{equation}\label{eq:4.14}
\int_{-\infty}^\lambda \breve{\theta}_h(\lambda'-\mu) d\lambda' =
\int_{-\infty}^{{\lambda-\mu\over h}}  \breve{\theta}_1(t) dt=
{\bf 1}_{]-\infty, \lambda[}(\mu) +
{\mathcal  O}\Bigl(\big\langle {\frac{\lambda-\mu}{h}}\big\rangle^{-\infty}\Bigr).
\end{equation}
Indeed, for $\mu < \lambda$ and all $k \in \N$ we have
$$\int_{-\infty}^{\frac{\lambda - \mu}{h}} \breve{\theta}_1(t) dt -1 = -\int_{\frac{\lambda - \mu}{h}}^{\infty} t^k\breve{\theta}_1(t) \frac{1}{t^k}dt$$
and 
$$\int_{\frac{\lambda - \mu}{h}}^{\infty} t^k\breve{\theta}_1(t) \frac{1}{t^k}dt
 \leq \Bigl(\frac{\lambda - \mu}{h}\Bigr)^{-k} \int_{\R} t^k \breve{\theta}_1(t) dt.$$
A similar argument works for $\mu > \lambda.$
From (\ref{eq:4.13}) we have for $k \geq 2$ the estimate
\begin{eqnarray}\label{eq:4.15}
\int_{\R}\big \langle \frac{\lambda-\mu}{h}\big\rangle^{-k}\xi_2'(\mu) \varphi(\mu)d\mu = \sum_{m = -\infty}^{\infty}\int_{\frac{m}{C_0} \leq \frac{\mu - \lambda}{h} < \frac{m+1}{C_0}} \big\langle \frac{\lambda-\mu}{h}\big\rangle^{-k}\xi_2'(\mu) \varphi(\mu)d\mu\\ \nonumber
\leq \sum_{m = 0}^{\infty} \Bigl(1 + \Bigl(\frac{m}{C_0}\Bigr)^2\Bigr)^{-k/2} \int_{\lambda + \frac{m h}{C_0}}^{\lambda + \frac{(m+1)h}{C_0}}\xi_2'(\mu) \varphi(\mu) d\mu\\ \nonumber
+ \sum_{m =-\infty}^{-1} \Bigl(1 + \Bigl(\frac{|m +1|}{C_0}\Bigr)^2\Bigr)^{-k/2} \int_{\lambda + \frac{m h}{C_0}}^{\lambda + \frac{(m+1)h}{C_0}}\xi_2'(\mu) \varphi(\mu) d\mu
\leq  \sum_{m = -\infty}^{\infty}  \frac{1}{(C_0 + |m|)^k} {\mathcal O}(h^{-1}),
\end{eqnarray}
where in the last inequality at the right hand side we used the fact that (\ref{eq:4.13}) holds uniformly with respect to $\lambda \in \R$ and we can estimate the integrals involving $\xi_2'(\mu) \varphi(\mu)$ by ${\mathcal O}(h^{-1})$ uniformly with respect to $m$.


Inserting the right hand side of (\ref{eq:4.14})   in the left hand side of (\ref{eq:4.12})  and using (\ref{eq:4.15}), we get
$$ F_{\varphi}(\lambda) =  (2\pi h)^{-2} \Bigl( \int\int_{ p_2\leq \lambda}  M\psi^2(x,y) \varphi(p_2) dxdyd\eta d\zeta+{\mathcal O}(h)\Bigr).$$

We apply the same argument for $\xi_1(h)$ and introduce the function
$$G_{\varphi}(\lambda) = \int_{-\infty}^\lambda \xi_1'(\mu)\varphi(\mu) d\mu.$$
 Replacing the function $\psi$ by $(M -\partial_x V)^{1/2} \psi$, we
get
$$G_{\varphi}(\lambda) =
{1\over (2\pi h)^2} \Bigl( \int\int_{ p_2\leq \lambda} ( M- \partial_x V)\psi^2(x,y) \varphi(p_2) dxdyd\eta d\zeta+{\mathcal O}(h)\Bigr).$$
Since $\xi_h = \xi_1 - \xi_2$, the above results yield
\begin{equation}\label{eq:4.16}
M_\varphi(\lambda)=\int_{-\infty}^\lambda \xi_h'(\mu) \varphi(\mu) d\mu=
{1\over (2\pi h)^2} \Bigl( \int\int_{ p_2\leq \lambda}  -\partial_xV(x,y) \varphi(p_2) dxdyd\eta d\zeta+{\mathcal O}(h)\Bigr).
\end{equation}
Now, we are ready to prove Theorem 5. Assume that $\lambda_1<\lambda_2$, and let $\epsilon>0$ be small enough. Let
$\varphi_1, \varphi_2, \varphi_3\in C^\infty_0(]\lambda_1-\epsilon,\lambda_2+\epsilon[)$ with $\varphi_1+\varphi_2+\varphi_3=1$ on $[\lambda_1,\lambda_2]$,
 ${\rm supp}\:\varphi_1\subset
]\lambda_1-\epsilon,\lambda_1+\epsilon[$, ${\rm supp}\:\varphi_2\subset
]\lambda_2-\epsilon,\lambda_2+\epsilon[$ and  ${\rm supp}\:\varphi_3\subset
]\lambda_1,\lambda_2[$. We choose $\epsilon$ small enough so that (\ref{eq:4.8})
  holds for all $\tau\in ]\lambda_1-\epsilon,\lambda_1+\epsilon[\:\cup \:]\lambda_2-\epsilon,\lambda_2+\epsilon[$.
We write
$$\xi_h(\lambda_2)-\xi_h(\lambda_1)=\int_{\lambda_1}^{\lambda_2} (\varphi_1+\varphi_2+\varphi_3)(\lambda) \xi_h'(\lambda)d\lambda$$
$$= M_{\varphi_2}(\lambda_2) + M_{\varphi_1}(\lambda_2) -M_{\varphi_2}(\lambda_1) -M_{\varphi_1}(\lambda_1) -{\rm tr}(\partial_x V\varphi_3(H)),$$
where for the function $\varphi_3$ we have exploited (\ref{eq:2.4}). Next for the term involving $\varphi_3$ we
apply Theorem 3 and obtain
$${\rm tr}\: (\pv \varphi_3(H)) = \frac{1}{(2 \pi h)^{2}} \int\int \partial_x V \varphi_3(p_2) dx dy d\zeta d\eta + {\mathcal O}(h^{-1}).$$
For $M_{\varphi_{1}}(\lambda_i)$ and $M_{\varphi_{2}}(\lambda_i),\: i = 1,2,$ we exploit the above argument and we deduce the asymptotics taking into account (\ref{eq:4.16}). Summing the terms involving $\varphi_j,\:j = 1,2,3$, we conclude that
$$\xi_h(\lambda_2) - \xi_h(\lambda_1) = (2 \pi h)^{-2} d(\lambda_2, \lambda_1) + {\mathcal O}(h^{-1}).$$
For the leading term we have
$$d(\lambda_2, \lambda_1) = \int\int_{\lambda_1 \leq p_2 \leq \lambda_2} -\pv (x,y) \Bigl(\varphi_1(p_2) + \varphi_2(p_2)+ \varphi_3(p_2)\Bigr) dx dy d\zeta d\eta $$
 $$= -\int\int_{ p_2 \leq \lambda_2} \pv (x, y)  dx dy d\zeta d\eta + \int\int_{p_2 \leq \lambda_1} \pv (x, y) dx dy d\zeta d\eta.$$
Finally, notice that
$$c_0(\lambda) = -\int\int_{p_2 \leq \lambda} \pv(x, y) dx dy d\zeta d\eta = - \int_{\R^2} \pv(x, y)\Bigl(\int_{(\zeta - y)^2 + \eta^2 \leq (\lambda - x - V(x,y))_+} d\zeta d\eta\Bigr)dxdy$$
$$ =-\pi \int_{\R^2} \pv(x, y) (\lambda - x - V(x, y))_+ dx dy$$
and the proof of Theorem 5 is complete.
\end{proof}
\begin{rem}
If $\lambda \gg 1$ is large enough (resp. $\lambda \ll -1$) then on ${\rm supp}\:(\partial_x V)$, we have
$$(\lambda-x-V)_+=\lambda-x-V,\,\,\, ({\rm resp.}\,\, (\lambda-x-V)_+=0).$$
Consequently, 
$$c_0(\lambda)=- \pi\int_{\R^2} V(x,y)dxdy,\,\,\,\,{\rm for}\,\,\,  \lambda \gg 1,$$
and
$$c_0(\lambda)=0,\,\,\, {\rm for}\,\,\, \lambda \ll -1.$$
In particular, if we normalize $\xi_h(\lambda)$ by $\lim_{\lambda \rightarrow -\infty}\xi_h(\lambda)=0$, we get
$$\xi_h(\lambda)=(2\pi h)^{-2} c_0(\lambda)+{\mathcal O}(h^{-1}).$$
\end{rem}

\begin{rem} The results of this section can be generalized for potentials $V(x, y)$ for which there exists $\delta_1 \in \R$ such that supp $V \subset \{(x, y) \in \R^2:\: x \geq \delta_1\}$ by using the techniques in \cite{DP1}. For simplicity we treated the case of $V \in C_0^{\infty}(\R^2)$ to avoid the complications caused by the calculus of pseudodifferential operators.

\end{rem}

{\footnotesize

\end{document}